	\definecolor{darkred}{rgb}{.6,0,0}
	\definecolor{darkgreen}{rgb}{0,0.6,0}
	\definecolor{darkblue}{rgb}{0,0.,0.6}
\g@addto@macro\@openbib@code{\setlength{\itemsep}{-0.1cm}}
\theoremstyle{plain}
\newtheorem{thm}{Theorem} 
\newtheorem{lem}{Lemma}
\newtheorem{prop}{Proposition} 
\newtheorem{cor}{Corollary}
\theoremstyle{definition}
\newtheorem{rem}{Remark}
\newcommand{\Abs}[1]{\left\lvert#1\right\rvert}
\newcommand{\abs}[1]{\lvert#1\rvert}
\newcommand{\basis}[3]{#1_{#2}\mc\ldots\mc#1_{#3}}
\newcommand{\eps}[1]{\varepsilon}
\newcommand{\norm}[1]{\|#1\|}
\newcommand{\Norm}[1]{\left\|#1\right\|}
\newcommand{\round}[1]{\left\lceil#1\right\rfloor}
\newcommand{\ceil}[1]{\left\lceil#1\right\rceil}
\newcommand{\softO}{\widetilde{O}}
\newcommand*\keywords[1]{%
	\begingroup\small\textbf{Keywords:}\quad #1%
	\par\vspace*{1mm}\endgroup}
\newcommand{\topic}[1]{
	\ifdefined\textemph
		\smallskip\noindent\textemph{#1}~
	\else
		\smallskip\noindent\textbf{#1}~
	\fi		
}
\newcommand{\Dep}{\mathsf{Dep}}
\newcommand{\mc}{\mbox{,\ }}
\newcommand{\rank}{\mathrm{rank}}
\newcommand{\HNF}{\mathrm{HNF}}
\newcommand{\rational}{\mathbb{Q}}
\newcommand{\integer}{\mathbb{Z}}
\newcommand{\T}{\mathrm{T}}
\renewcommand{\to}[3]{#1\mc#2\mc\ldots\mc #3}
\newcommand{\sample}{\leftarrow}
\newcommand{\MDep}{\mathsf{MDep}}
\title{\textbf{On the probability of generating a primitive matrix}\thanks{Jingwei \uppercase{Chen}.
		Chongqing Key Laboratory of Automated Reasoning and Cognition, Chongqing Institute of
		Green and Intelligent Technology, Chinese Academy of Sciences, Chongqing $400714$, China; Chongqing College, University of Chinese Academy of Sciences, Chongqing $400714$, China; Key Laboratory of Advanced Manufacturing Technology of Ministry of Education, Guizhou
		Universityy, Guiyang $550025$, China.  Email: chenjingwei@cigit.ac.cn \\  
		Yong \uppercase{Feng}, Wenyuan \uppercase{Wu}. 
		Chongqing Key Laboratory of Automated Reasoning and Cognition, Chongqing Institute of
		Green and Intelligent Technology, Chinese Academy of Sciences, Chongqing $400714$, China; Chongqing College, University of Chinese Academy of Sciences, Chongqing $400714$, China.  Email: \{yongfeng, wuwenyuan\}@cigit.ac.cn\\
		Yang  \uppercase{Liu} (Corresponding author). 
		Information Science and Engineering, Chongqing Jiaotong University, Chongqing 400074,
		China.  Email: liuyang13@cqjtu.edu.cn\\
	{This research was supported by National Key Research and Development Project (2020YFA0712303), NSFC (61903053), Youth Innovation Promotion Association of CAS, Guizhou Science and Technology Program [2020]4Y056 and Chongqing Science and Technology Program (cstc2021jcyj-msxmX0821, cstc2020yszx-jcyjX0005,  cstc2021yszx-jcyjX0004, 2022YSZX-JCX0011CSTB).}}}
\author{
	Jingwei \uppercase{Chen}\and Yong \uppercase{Feng}\and Yang \uppercase{Liu}\and Wenyuan \uppercase{Wu}
}
\date{\today}
\begin{document}

\maketitle

\begin{abstract}
Given a $k\times n$ integer \textit{primitive}  matrix $\bm{A}$ (i.e., a matrix can be extended to an $n\times n$ unimodular matrix over the integers) with the maximal absolute value of entries $\norm{\bm{A}}$  bounded by  {an integer} $\lambda$ from above, we study the probability that the $m\times n$ matrix extended from $\bm{A}$ by appending other $m-k$ row vectors of dimension $n$ with entries chosen randomly and independently from the uniform distribution over $\{\to{0}{1}{\lambda-1}\}$ is still primitive. We present a complete and rigorous proof of a lower bound on the probability, which is at least a constant for fixed $m$ in the range $[k+1, n-4]$. As an application, we prove that there exists a fast Las Vegas algorithm that completes a $k\times n$ primitive matrix $\bm{A}$ to an  $n\times n$ unimodular matrix within expected  $\softO(n^{\omega}\log \norm{\bm{A}})$ bit operations, where  $\softO$ is big-$O$ but without log factors, $\omega$ is the exponent on the arithmetic operations of matrix multiplication.
\end{abstract}

\keywords{Integer matrix, unimodular matrix, matrix completion, probabilistic algorithm.} 

%%%%%%%%%%%%%%%%%%%%%%%%%%%%%%%%%%%%%%%%%%%%%%%%%%%%%%%%%%%%%%%%

\section{Introduction}\label{sec:Intro}

A vector $\bm{x}\in\integer^n$ is called \textit{primitive} if $\bm{x}=d\bm{y}$ for $\bm{y}\in\integer^n$ and $d\in\integer$ implies $d=\pm 1$. More generally, a matrix $\bm{A}\in\integer^{k\times n}$ with $k\le n$ is called \textit{primitive} if $\bm{x}=\bm{yA}\in\integer^n$ for $\bm{y}\in\rational^k$ implies $\bm{y}\in\integer^k$; in this case we also say the $k$ rows of $\bm{A}$ are \textit{primitive} in $\integer^n$. In particular, an $n\times n$ primitive matrix over $\integer$ is also called \emph{unimodular}, i.e., an integer square matrix with determinant $\pm 1$. It can be proved that a $k\times n$ primitive matrix can always be extended to an $n\times n$ unimodular matrix over $\integer$; see, e.g., \cite{MazeRosenthalWagner2011}.

Given a  primitive matrix $\bm{A}\in\integer^{k\times n}$ with $\norm{\bm{A}}:=\max_{i,j}\abs{a_{i,j}}$  {bounded by an integer $\lambda$ from above}, our focus in this paper will be on the probability of that the $m\times n$ matrix extended from $\bm{A}$ by appending other $m-k$ vectors of dimension $n$ with entries chosen randomly and independently from the uniform distribution over $\Lambda:=\integer\cap[0\mc\lambda)$ is still primitive.  

\subsection{Main results}
In particular, we prove the following

\begin{thm}\label{thm:main}
	Given a primitive matrix $\bm{A}\in\integer^{k\times n}$ with $\norm{\bm{A}}$  {bounded by an integer $\lambda$ from above} and an integer $s$ with $0\le s\le n-k-2$, let $\bm{B}\in\integer^{(n-s-1)\times n}$ be a matrix with first $k$ rows  copied from $\bm{A}$ and entries of  the other rows chosen  randomly and independently from the uniform distribution over $\Lambda$. Then the probability of the event that $\bm{B}$ is	primitive is at least 
	\begin{equation}\label{eq:main}
		1-4 \left (\frac{2}{3}\right )^{s+1}\left (1-\left (\frac{2}{3}\right )^{n-k-s-1}\right ) - \dfrac{2(n-s)^2}{\lambda^{s+2}}\left (1-\frac{1}{\lambda^{n-k-s-1}}\right ).
	\end{equation}
\end{thm}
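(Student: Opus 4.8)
The plan is to append the $r:=n-k-s-1$ random rows one at a time and to pay, at each step, the probability of losing primitivity. Put $\bm A^{(0)}:=\bm A$ and let $\bm A^{(i)}$ be $\bm A^{(i-1)}$ with the random row $\bm v_i$ appended, so $\bm A^{(r)}=\bm B$. As $\bm A^{(0)}$ is primitive, if $\bm B$ is not primitive there is a least index $i$ with $\bm A^{(i)}$ not primitive, and then $\bm A^{(i-1)}$ is still primitive; hence
\[
  \Pr\bigl[\bm B\text{ not primitive}\bigr]\ \le\ \sum_{i=1}^{r}\Pr\bigl[\bm A^{(i)}\text{ not primitive}\ \bigm|\ \bm A^{(i-1)}\text{ primitive}\bigr],
\]
and each conditional probability can be bounded uniformly over the history $\bm v_1,\dots,\bm v_{i-1}$, since conditioning on any history that makes $\bm A^{(i-1)}$ primitive fixes $\bm A^{(i-1)}$ and leaves $\bm v_i$ uniform on $\Lambda^{n}$.

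Fix such a history and set $d:=n-(k+i-1)$, which runs through $n-k,n-k-1,\dots,s+2$ as $i$ runs from $1$ to $r$; in particular $d\ge s+2\ge 2$. Let $L$ be the integer lattice generated by the rows of $\bm A^{(i-1)}$; since $\bm A^{(i-1)}$ is primitive, $L$ is saturated of rank $k+i-1$, so $\integer^{n}/L\cong\integer^{d}$. Writing $\pi\colon\integer^{n}\to\integer^{d}$ for the quotient map, a short argument from the fact that a primitive matrix extends to a unimodular one (transform $\bm A^{(i-1)}$ to $[\,I\mid 0\,]$ by unimodular column operations, then clear the first $k+i-1$ entries of the last row) shows that $\bm A^{(i)}$ is primitive if and only if $\pi(\bm v_i)\in\integer^{d}$ is a primitive vector. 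Everything thus reduces to one estimate: for an arbitrary surjective $\pi\colon\integer^{n}\to\integer^{d}$ and $\bm v$ uniform on $\Lambda^{n}$, bound $\Pr[\pi(\bm v)\text{ not primitive}]$; summing such a bound over $i$ and carrying out the two resulting geometric sums in $d$ is exactly what reproduces the two subtracted terms of \eqref{eq:main}.

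For the per-step estimate I would split by the cause of non-primitivity — either $\pi(\bm v)=0$, or $p\mid\pi(\bm v)$ componentwise for some prime $p$ — and split the primes into small and large. For a fixed prime $p$, $\pi\bmod p$ is still surjective, so $\{\bm v:\pi(\bm v)\equiv 0\bmod p\}$ is cut out by $d$ independent $\BBF_p$-linear forms; fixing $n-d$ of the coordinates then determines the remaining $d$, each of which hits a prescribed residue with probability at most $\lceil\lambda/p\rceil/\lambda$, so $\Pr[\pi(\bm v)\equiv 0\bmod p]\le(\lceil\lambda/p\rceil/\lambda)^{d}$. Because $\lceil\lambda/2\rceil/\lambda\le 2/3$ and $\lceil\lambda/p\rceil/\lambda\le 1/2$ for every prime $p\ge 3$ (both valid for all $\lambda\ge 2$), summing this over the few small primes gives a bound $\le 2\,(2/3)^{d}$, and summing that over $d\in\{s+2,\dots,n-k\}$ gives precisely $4\,(2/3)^{s+1}\bigl(1-(2/3)^{\,n-k-s-1}\bigr)$. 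The term $\Pr[\pi(\bm v)=0]=\Pr[\bm v\in L]$ is bounded by counting the points of the saturated rank-$(k+i-1)$ lattice $L$ inside the box $\Lambda^{n}$, which is at most about $\lambda^{\,k+i-1}$ times a polynomial correction in $n$ — this is where the factor $(n-s)^{2}$ originates — hence of order $(n-s)^{2}\lambda^{-d}$; summed over $d$ this yields the term $\dfrac{2(n-s)^{2}}{\lambda^{s+2}}\bigl(1-\lambda^{-(n-k-s-1)}\bigr)$, using $\lambda\ge 2$.

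The delicate part — and the step I expect to be the main obstacle — is the contribution of the large primes: the naive union bound $\sum_{p}(\lceil\lambda/p\rceil/\lambda)^{d}$ diverges (for small $\lambda$ every term is $\lambda^{-d}$), so one must avoid summing over all primes. For $p$ exceeding the Hadamard bound on the $(k+i)\times(k+i)$ minors of $\bm A^{(i)}$, divisibility of all those minors by $p$ forces them to vanish, i.e.\ forces $\pi(\bm v)=0$, which is already counted; and the remaining medium-range primes must be handled without a union bound, e.g.\ by fixing a column set $T$ with $\det\bm A^{(i-1)}_{T}\neq 0$, expanding each relevant minor of $\bm A^{(i)}$ as $\pm v_{i,j}\det\bm A^{(i-1)}_{T}+R_j$, and cross-eliminating the fresh coordinates $v_{i,j}$ to obtain a single integer that is simultaneously divisible by $p$ and of controlled size, which confines $\bm v$ to a thin set. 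Making the constants and the polynomial factor $(n-s)^{2}$ come out exactly as in \eqref{eq:main} is the technical heart of the argument.
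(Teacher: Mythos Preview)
Your high-level strategy---append the random rows one at a time, union-bound over the first step at which primitivity fails, and at each step localize at primes---is exactly the paper's approach (adapted from Eberly--Giesbrecht--Villard). The reformulation via the quotient $\pi\colon\integer^n\to\integer^d$ is equivalent to the paper's ``rank drops over $\integer_p$'' events $\MDep_i$. Two of your attributions are off, however, and they are precisely what is needed to recover the constants in the statement.

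First, the factor $(n-s)^2$ does \emph{not} come from lattice-point counting for $\Pr[\pi(\bm v)=0]$. Since $L$ is saturated of rank $n-d$, pick $n-d$ column indices on which the rows of $\bm A^{(i-1)}$ are nonsingular over $\rational$; the remaining $d$ coordinates of any $\bm v\in L$ are then determined, so $\Pr[\bm v\in L]\le\lambda^{-d}$ with no polynomial correction---this is the paper's bound for the event $\Dep_i$. In the paper the $(n-s)^2$ arises from the \emph{large-prime} term: conditioned on $\pi(\bm v)\ne 0$, any prime $p\ge\lambda$ dividing every entry of $\pi(\bm v)$ must divide a nonzero $m\times m$ minor of $\bm A^{(i)}$ (with $m$ the current number of rows), hence $p\le(m\lambda)^m$; there are at most $m(1+\log_\lambda m)$ such primes, each contributing probability at most $\lambda^{-d}$, and $m(1+\log_\lambda m)+1$ is then bounded by $(n-s)(1+\log_\lambda(n-s-1))\le(n-s)^2$. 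So a plain union bound over those finitely many primes is all that is needed---your cross-elimination device is an unnecessary detour.

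Second, your one-line justification for the small-prime contribution $2(2/3)^d$ does not work as written. The inequalities $\lceil\lambda/2\rceil/\lambda\le 2/3$ and $\lceil\lambda/p\rceil/\lambda\le 1/2$ for every prime $p\ge 3$ are correct, but summing $(1/2)^d$ over infinitely many primes diverges. The paper makes the split at $p<\lambda$ versus $p\ge\lambda$: for $p<\lambda$ the sum is finite, and a short case analysis on $\lambda\in\{2,3,4,5,6,7\}$ versus $\lambda\ge 8$ (using $\lceil\lambda/p\rceil/\lambda\le 2/(p-1)$ in the last case) yields the bound $(2/3)^d+\tfrac34(1/3)^d\le 2(2/3)^d$; primes $p\ge\lambda$ are then handled by the finite count in the previous paragraph. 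Without this $p<\lambda$ / $p\ge\lambda$ split your small-prime estimate does not stand.
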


Note that Theorem \ref{thm:main} holds for $k=0$ as well, which is the case of directly choosing $n-s-1$ vectors with entries from the uniform distribution over $\Lambda$. Roughly speaking, for this case,  it was shown by Maze \textit{et al.} \cite{MazeRosenthalWagner2011} that when $\lambda\rightarrow\infty$, the limit probability of that an  $(n-s-1)\times n$ integer matrix with entries random chosen  from the uniform distribution over $\Lambda$  is  primitive is
\begin{equation}\label{eq:natural_density}
	\prod_{j=s+2}^{n}\dfrac{1}{\zeta(j)}, 
\end{equation}   
where $\zeta(\cdot)$ is the Riemann's zeta function. In this sense, Theorem \ref{thm:main} gives an effective lower bound on the probability for finite $\lambda$, and hence will be useful in practice, especially in computer science. 

From Eq. \eqref{eq:main}, the parameter $k$ plays a very limited role for the result. In fact, one may easily obtain a simpler but  worse bound: 
\begin{align}\label{eq:simpler_bnd}
	1-4\left (\frac{2}{3}\right )^{s+1} - \frac{2(n-s)^2}{\lambda^{s+2}}. 
\end{align}
Note that this bound is independent of the parameter $k$.  {For example, if $s=3$ is fixed and $\lambda \ge 3(n-3)^{2/5}$, then the bound \eqref{eq:simpler_bnd} implies that the resulting $(n-4)\times n$ matrix will be primitive with a probability at least $0.2$}.

Moreover, if $\lambda$ is large enough (with respect to $n$), this bound can be further simplified as 
\begin{align}
	\label{eq:oversimplified}
	1-(4+\delta) \left (\frac{2}{3}\right )^{s+1}	
\end{align}
for some $0<\delta<1$. Surprisingly, this oversimplified bound only depends on $s$. 

For given $k\mc n$ and $\lambda$, one can decide the smallest integer $s\in [0,n-k-2]$ such that the lower bound given in Eq. \eqref{eq:main} is a usable bound, i.e., between $0$ and $1$. For instance, $s$ should be at least $3$ for $4 \left (\frac{2}{3}\right )^{s+1}<1$. 

We remark that when $s=n-k-2$ the probability bound given in Theorem \ref{thm:main} matches the empirical probability well according to our experiments in Section \ref{sec:exp}. 

{In addition, one may not further expect a constant probability for the case of $s=-1$ (that corresponds to the resulting matrix is an $n\times n$ unimodular matrix),  since the  natural density of  random $n\times n$ unimodular matrices is $0$; see \cite[Lemma 5]{MazeRosenthalWagner2011}.} 

\subsection{Implications} 
As an application of Theorem \ref{thm:main}, we present a fast Las Vegas algorithm (Algorithm \ref{algo:umc}) that efficiently completes a primitive matrix $\bm{A}\in\integer^{k\times n}$ to an $n\times n$ unimodular matrix $ \bm{U} $ such that $ \norm{\bm{U}} \le n^{O(1)}\norm{\bm{A}} $. More specifically,  we prove the following

\begin{thm}\label{thm:umc}
	Given a primitive matrix $\bm{A}\in\integer^{k\times n}$, there exists a Las Vegas algorithm that completes $\bm{A}$ to an $n\times n$ unimodular matrix $ \bm{U} $ such that $ \norm{\bm{U}} \le n^8\norm{\bm{A}} $ in an expected number of $O(n^{\omega+\varepsilon}\log^{1+\varepsilon}\norm{\bm{A}})$ bit operations.
\end{thm}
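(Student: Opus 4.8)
The plan is to use Theorem~\ref{thm:main} to reduce, in an expected constant number of random trials, to completing a \emph{primitive} matrix that is only a constant number of rows short of being square, and then to finish deterministically while keeping the newly added rows small. We may assume $\norm{\bm A}\ge 2$ (the remaining cases only make the instance easier, with $\norm{\bm A}$ replaced by a fixed polynomial in $n$) and $k\le n-5$ (if $k\ge n-4$, then $\bm A$ is already square up to a constant number of rows and the deterministic completion described below applies directly to $\bm A$).

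First fix $s=3$ and set $\lambda:=\max\{\norm{\bm A},\lceil 3n\rceil\}$, so that $\log\lambda=O(\log n+\log\norm{\bm A})$ and, by the simplified bound~\eqref{eq:simpler_bnd}, appending to $\bm A$ a block of $n-4-k$ rows with entries drawn independently and uniformly from $\Lambda=\integer\cap[0,\lambda)$ produces a primitive matrix $\bm B\in\integer^{(n-4)\times n}$ with probability at least an absolute constant $p_0>0$ (one may take $p_0\ge 1/5$); note that $\norm{\bm B}<\lambda\le 3n\norm{\bm A}$. The Las Vegas algorithm repeats the following until it accepts: sample such a $\bm B$, and accept it iff $\bm B$ is primitive --- equivalently, iff $\bm B$ has a right inverse over $\integer$, equivalently iff the greatest common divisor of its $(n-4)\times(n-4)$ minors is $1$ --- which is detected by a single call to a fast (Las Vegas) Hermite/Smith normal form routine at cost $\softO(n^\omega\log\lambda)=\softO(n^\omega\log\norm{\bm A})$. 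By Theorem~\ref{thm:main} the expected number of trials is at most $1/p_0=O(1)$, so this phase costs an expected $\softO(n^\omega\log\norm{\bm A})$ bit operations.

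It remains to complete the accepted primitive $\bm B$ to a unimodular $\bm U=\big(\begin{smallmatrix}\bm B\\\bm C\end{smallmatrix}\big)$ with $\norm{\bm C}$ small; since $\bm A$ occupies the first $k$ rows of $\bm B$, this also completes $\bm A$. After a column permutation absorbed into a unimodular $\bm V\in\GL_n(\integer)$ with $\bm B\bm V=[\bm I_{n-4}\mid\bm 0]$, the matrix $\bm V^{-1}$ is unimodular with first $n-4$ rows equal to $\bm B$, so its last four rows give \emph{a} completion $\bm C_0$; as the corank $4$ is constant, such a $\bm C_0$ can be extracted in $\softO(n^\omega\log\norm{\bm A})$ bit operations from Hermite/Smith computations on matrices of constant height (e.g.\ on $\bm B^\T$, or on $\big(\begin{smallmatrix}\bm B\\\bm C^{(0)}\end{smallmatrix}\big)$ for a trivial nonsingular guess $\bm C^{(0)}$ made of four standard basis vectors), rather than by forming $\bm V^{-1}$ from a possibly-large $\bm V$. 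We then size-reduce: replace $\bm C_0$ by $\bm C:=\bm C_0-\round{\bm X}\,\bm B$, where $\bm X\in\rational^{4\times(n-4)}$ is the entrywise rounding of the solution of $\bm X(\bm B\bm B^\T)=\bm C_0\bm B^\T$. Then $\big(\begin{smallmatrix}\bm B\\\bm C\end{smallmatrix}\big)$ differs from $\big(\begin{smallmatrix}\bm B\\\bm C_0\end{smallmatrix}\big)$ by integer row operations, hence is still unimodular; and since every row of $\bm B$ has Euclidean norm at most $\sqrt n\,\lambda$, while the orthogonal projections of the rows of $\bm C_0$ onto $(\rspan\bm B)^\perp$ may be taken of norm $O(1)$ (they span a rank-$4$ lattice containing four linearly independent vectors of norm $\le 1$, namely the projections of suitable standard basis vectors, so it has a reduced basis of $O(1)$-norm vectors), we obtain $\norm{\bm C}\le\tfrac12\sum_{j}\norm{\bm b_j}+O(1)=O(n^{3/2}\lambda)=O(n^{5/2}\norm{\bm A})$. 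Together with $\norm{\bm B}<\lambda$ this yields $\norm{\bm U}\le n^8\norm{\bm A}$.

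The main obstacle is the running-time accounting, not correctness or the size bound: one must verify that extracting $\bm C_0$ and performing the rounding step really stay within $\softO(n^\omega\log\norm{\bm A})$ bit operations. The danger is that a black-box Hermite multiplier $\bm V$, or an explicit basis of the kernel lattice of $\bm B$, may have entries of bit-length $O(n\log\lambda)$, so that inverting or rounding naively against such an object costs an extra factor of $n$. This is exactly where reducing to \emph{constant} corank pays off: the nontrivial part of the completion lives in $O(1)$ dimensions, so the four completion rows can be produced already polynomially bounded in $n$ and $\lambda$ --- through Hermite/Smith computations on matrices of constant height together with the size-reduction against the bounded-norm rows of $\bm B$ --- and no intermediate of bit-length $O(n\log\lambda)$ need ever be formed, so every step fits within the budget by fast integer linear algebra. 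Summing the edge-case handling, the expected $\softO(n^\omega\log\norm{\bm A})$ of the sampling-and-testing loop, and the $\softO(n^\omega\log\norm{\bm A})$ of the final completion and size-reduction gives an expected $O(n^{\omega+\varepsilon}\log^{1+\varepsilon}\norm{\bm A})$ bit operations, with $\norm{\bm U}\le n^8\norm{\bm A}$, which proves Theorem~\ref{thm:umc}.
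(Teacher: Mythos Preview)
Your high-level strategy matches the paper's: use Theorem~\ref{thm:main} with $s=3$ to randomly extend $\bm A$ to a primitive $(n-4)\times n$ matrix $\bm B$ with constant success probability, test, and then deterministically supply the last four rows. The divergence is entirely in that last step.

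The paper carries it out via Storjohann's \emph{determinant reduction} (Algorithm~\ref{algo:detRed}): given a nonsingular square matrix, replace one column so that the last diagonal entry of its HNF becomes~$1$. Proposition~\ref{prop:det_red} shows this costs $O(n^{\omega+\varepsilon}\log^{1+\varepsilon}\norm{\bm A})$ bit operations and enlarges the max entry by at most a factor~$n^2$. Iterating four times on $\bm B^\T$ (after padding to a square matrix) yields a unimodular matrix; the bound $\norm{\bm U}\le n^8\norm{\bm A}$ and the running time then follow by multiplying four factors of~$n^2$ and summing four subroutine calls. No large intermediate object is ever formed.

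Your alternative --- extract a completion $\bm C_0$ from a Hermite multiplier of $\bm B$ and then size-reduce against the rows of $\bm B$ --- has a genuine gap exactly where you flag it. You assert that $\bm C_0$ ``can be extracted in $\softO(n^\omega\log\norm{\bm A})$ bit operations from Hermite/Smith computations on matrices of constant height'', but you never name the computation: the natural candidates ($\bm B^\T$, or $\binom{\bm B}{\bm C^{(0)}}$ with four standard basis rows) are \emph{not} of constant height, and their unimodular multipliers can have entries of bit-length $\Theta(n\log\lambda)$, so even writing $\bm C_0$ down may exceed the budget. Moreover, your size-reduction $\bm C:=\bm C_0-\round{\bm X}\bm B$ controls only the component of $\bm C_0$ in $\rspan(\bm B)$; the orthogonal component is unchanged. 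Your claim that this component ``may be taken of norm $O(1)$'' is an existence statement about the rank-$4$ projected lattice, not an algorithm: you still have to find integer lifts of a short basis of that lattice within the time budget, and you have not said how. (Doing this carefully is essentially what determinant reduction accomplishes.) As written, neither the $\softO(n^\omega\log\norm{\bm A})$ running time nor the entry bound $\norm{\bm U}\le n^8\norm{\bm A}$ is established.
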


\subsection{Techniques} 
The essential ingredient of our proof for Theorem \ref{thm:main} is adapted from \cite[Section 6]{EberlyGiesbrechtVillard2000}, which was used to analyze the expected number of nontrivial invariant factors of a random integer matrix. The main idea is to give an upper bound on the probability that the resulting $(n-s-1)\times n$ matrix is not primitive. Based on the following lemma (whose proof can be obtained from, e.g., \cite[Fact 3]{ChenStorjohann2005:latcompr}), the bound can be analyzed by localizing at each prime $p$.
\begin{lem}
	A $k\times n$ integer matrix $\bm{A}$ is not primitive if and only if there exists at least one prime number $p$ such that the resulting matrix is not full rank over the finite field $\integer_p:=\integer/p\integer$.
\end{lem}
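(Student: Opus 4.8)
The plan is to prove both implications directly from the definition of primitivity, using reduction modulo $p$ together with a clearing‑of‑denominators argument; I will also record the reformulation through the Smith normal form, which is what makes the ``localize at each prime'' strategy in the sequel transparent.

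First I would unwind the definitions. By the definition above, $\bm{A}\in\integer^{k\times n}$ (with $k\le n$) is \emph{not} primitive precisely when there is a row vector $\bm{y}\in\rational^k\setminus\integer^k$ with $\bm{yA}\in\integer^n$. Equivalently, writing $\bm{A}=\bm{U}\bm{D}\bm{V}$ for a Smith normal form with $\bm{U}\in\GL_k(\integer)$, $\bm{V}\in\GL_n(\integer)$ and invariant factors $d_1\mid\cdots\mid d_k$, non‑primitivity amounts to the cokernel $\integer^n/(\integer^k\bm{A})$ having nonzero torsion, i.e. some $d_i\neq\pm1$. Since the rank of $\bm{A}\bmod p$ over $\integer_p$ equals the number of $d_i$ coprime to $p$, the condition ``$\rank_{\integer_p}(\bm{A}\bmod p)<k$ for some prime $p$'' says exactly that some $d_i$ has a prime divisor; so the lemma is morally the elementary fact that a finitely generated abelian group is torsion‑free iff it has no $p$‑torsion for any $p$. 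I would, however, give a self‑contained element‑chasing proof rather than invoke the structure theorem.

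For the forward direction, suppose $\bm{A}$ is not primitive and fix $\bm{y}\in\rational^k\setminus\integer^k$ with $\bm{yA}\in\integer^n$. Since $\bm{y}\notin\integer^k$, some coordinate $y_i$ has negative $p$‑adic valuation for some prime $p$; put $c:=-\min_i v_p(y_i)\ge 1$ and $\bm{w}:=p^{c}\bm{y}$. Then every entry of $\bm{w}$ is $p$‑integral and at least one is a $p$‑adic unit, so the reduction $\bm{w}\bmod p$ is a \emph{nonzero} vector of $\integer_p^k$; moreover $\bm{wA}=p^{c}(\bm{yA})\in p^{c}\integer^n\subseteq p\integer^n$, so reducing modulo $p$ yields $(\bm{w}\bmod p)(\bm{A}\bmod p)=0$, a nontrivial left kernel vector, whence $\rank_{\integer_p}(\bm{A}\bmod p)<k$. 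For the converse, suppose some prime $p$ makes the rows of $\bm{A}\bmod p$ linearly dependent over $\integer_p$; pick $\bar{\bm{w}}\in\integer_p^k\setminus\{\bm{0}\}$ with $\bar{\bm{w}}\,(\bm{A}\bmod p)=\bm{0}$ and lift it to $\bm{z}\in\integer^k$ with entries in $\{0,1,\dots,p-1\}$, so $\bm{z}\neq\bm{0}$ and $p$ does not divide all entries of $\bm{z}$. Then $\bm{zA}\equiv\bm{0}\pmod p$, hence $\bm{y}:=\frac{1}{p}\bm{z}$ satisfies $\bm{yA}=\frac{1}{p}\bm{zA}\in\integer^n$ while $\bm{y}\notin\integer^k$, so $\bm{A}$ is not primitive.

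The only step needing any care is the denominator clearing in the forward direction: one must pick a prime $p$ actually occurring in a denominator of $\bm{y}$ and then check that multiplying by the \emph{full} $p$‑power $p^{c}$ sends $\bm{yA}$ into $p\integer^n$ (not merely into $\integer^n$) — which holds simply because $c\ge 1$. Everything else is bookkeeping, and as noted the statement can alternatively be read off from the Smith normal form, or cited from \cite[Fact 3]{ChenStorjohann2005:latcompr}.
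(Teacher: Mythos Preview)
Your proof is correct. Both the element‑chasing argument and the Smith normal form reformulation are valid; the only slightly nonstandard move is working in the localization $\integer_{(p)}$ rather than in $\integer$ for the forward direction, but the reduction map $\integer_{(p)}\to\integer_p$ is a ring homomorphism and your computation $(\bm{w}\bmod p)(\bm{A}\bmod p)=(\bm{wA})\bmod p=0$ goes through exactly as written.

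As for comparison with the paper: the paper does not actually prove this lemma. It simply states it and refers the reader to \cite[Fact~3]{ChenStorjohann2005:latcompr} for a proof. So you have supplied a complete, self‑contained argument where the paper offers only a citation. Your Smith normal form remark is in fact the cleanest way to see why the lemma enables the ``localize at each prime'' strategy used throughout Section~\ref{sec:proof}: non‑primitivity is equivalent to some invariant factor $d_i$ being $\neq\pm1$, and the primes witnessing rank deficiency are exactly those dividing $d_k$.
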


In addition, the algorithm for unimodular matrix completion is based on the determinant reduction technique, which was originally introduced by Storjohann in \cite[Section 15]{Storjohann2003} for computing the determinant of  a polynomial matrix, with a worked example for integer matrix followed. More details about determinant reduction for integer matrices  and an iterated usage of  this technique are given in \cite[Section 13.2]{Storjohann2005}. We give a full description of the algorithm for the integer matrix case and present a detailed analysis in Section \ref{sec:umc}.

\subsection{Related work}   
Primitive and unimodular matrices have many applications in different areas. For example, unimodular matrices can be applied to signal compression \cite{PhoongLin2002}; the lattice reduction algorithms \cite{LenstraLenstraLovasz1982, Schnorr1987} essentially produce a series of unimodular matrices (linear transformations) to improve the basis quality of the given lattice. 

In particular, generating a primitive or unimodular matrix with given rows or columns happens quite often in practice. For instance, one may need to generate unimodular matrices with at least one column of all ones in linear programming for simplex pivoting. In the literature, there exist many results on what conditions should be satisfied for that a partial integral matrix can be completed to a unimodular matrix. In 1956 Reiner \cite{Reiner1956} proved that a row vector can be completed to a unimodular matrix if and only if it is primitive. Zhan \cite{Zhan2006} proved that if $n$ entries of an $n\times n$ partial integral matrix are prescribed and these $n$ entries do not constitute a row or a column, then this matrix can be completed to a unimodular matrix. Fang \cite{Fang2007} improved Zhan's result by proving that if an $n\times n$ partial integral matrix has $2n-3$ prescribed entries and any $n$ entries of these do not constitute a row or a column, then it can be completed to a unimodular matrix. Duffner and Silva \cite{DuffnerSilva2017} gave necessary and sufficient conditions for the existence of unimodular matrices with a prescribed submatrix over a ring that either is Hermite and Dedekind finite or has stable range one. 

For the probability analysis, Maze \textit{et al.} \cite{MazeRosenthalWagner2011} analyzed the natural density of $k\times n$ primitive matrices. Guo \textit{et al.} \cite{GuoHouLiu2016} extended Maze \textit{et al.}'s result to a more general setting, where the natural density of $k\times n$ primitive matrices over all $m\times m$ ($m\ge\max\{k\mc n\}$) integer matrices  was considered. Note that the natural density can be interpreted as a limit probability, where each matrix entry is randomly and independently chosen from a set with an upper bound but the bound tends to infinity. However, a finite version is usually preferable in practice,  e.g., algorithm analysis. Therefore, the result in Theorem \ref{thm:main} will be useful. 

Additionally, a somewhat ``dual'' case is considered in \cite{FonteinWocjan2014}, where the probability of that $m$ integeral vectors with bounded entries generate a same lattice of rank $n$ was studied. In \cite{FonteinWocjan2014}, the ideal choice is $m = n+1$, but a theoretical lower bound on the probability was only proven for $m\ge 2n+1$. Aggarwal and Regev in  \cite{AggarwalRegev2016} and Kirshanova \textit{et al.} in \cite{KirshanovaNguyenStehleWallet2020} considered a closely related problem but the entries are randomly chosen from the discrete gaussian distribution over $\integer^n$, where $m$ is even larger.

For algorithms, Randall \cite{Randall1991} presented an algorithm  for generating random matrices over a finite field with a given determinant, which naturally can be used to generate unimodular matrices over finite fields. Kalaimani \textit{et al.} \cite{KalaimaniBelurSivasubramanian2013} and Zhou and Labahn \cite{ZhouLabahn2014} discussed algorithms for unimodular completion of polynomial matrices. The unimodular matrix completion algorithm discussed in this paper works for integer matrices, which is more efficient than the standard method for this problem (see Remark \ref{rem:umc} and \ref{rem:final}). 

\topic{Roadmap.} We prove Theorem \ref{thm:main} in Section \ref{sec:proof}.  Combining  the determinant reduction technique for integer matrices with Theorem \ref{thm:main}, we give a fast algorithm for the problem of unimodular matrix completion in Section \ref{sec:umc}. In Section \ref{sec:exp}, we present an extensive experimental study on the probability that the resulting $(n-s-1)\times n$ matrix is primitive and discuss some interesting problems for further study.

%%%%%%%%%%%%%%%%%%%%%%%%%%%%%%%%%%%%%%%%%%%%%%%%%%%%%%%%%%%%%%%%

\section{Proof of Theorem \ref{thm:main}}\label{sec:proof}

Given a primitive matrix $\bm{A}\in\integer^{k\times n}$ with $k<n$ and $\norm{\bm{A}}\le \lambda$, we now consider to extend $\bm{A}$ to an $(n-s-1)\times n$ matrix  by choosing other $n-k-s-1$ vectors with entries randomly and independently chosen from the uniform distribution over $\Lambda = \integer\cap[0\mc\lambda)$, where the integer $s$ satisfies $0\le s\le n-k-2$. Denote by $\bm{a}_i=(a_{i,j})_{1\le j\le n}$ the $i$-th row of $\bm{A}$. Then $\norm{\bm{a}_i}_\infty:=\max_j\{\abs{a_{i,j}}\}\le \lambda$. We always assume that $\lambda\ge 2$ for excluding the case of $\Lambda=\{0\}$. For convenience, we still use $\basis{\bm{a}}{k+1}{n-s-1}\in\integer^{n}$ to denote the random vectors with each entry chosen randomly and independently from the uniform distribution over $\Lambda$, and  denote 
\begin{align*}
	\bm{A}_i = \begin{pmatrix}
		\bm{a}_1\\\bm{a}_2\\\vdots\\\bm{a}_i
	\end{pmatrix} = \begin{pmatrix}
		a_{1,1} & a_{1,2}&\cdots& a_{1,n}\\
		a_{2,1} & a_{2,2}&\cdots& a_{2,n}\\
		\vdots&		\vdots&&\vdots\\
		a_{i,1}&a_{i,2}&\cdots&a_{i,n}
	\end{pmatrix}\mc\quad i=k\mc k+1\mc \ldots\mc n-s-1.
\end{align*} 

To prove Theorem \ref{thm:main}, we firstly need to bound from above the probability $P$ of the event that  the matrix $\bm{A}_{n-s-1}$ is not primitive under the assumption  that the given matrix $\bm{A}$ is primitive. 

\begin{lem}\label{lem:pr_n-s-1}
	Let all notations be as above. Then
	\begin{align*}%\label{eq:pr_n-s-1}
		P\le  4 \left (\frac{2}{3}\right )^{s+1}\left (1-\left (\frac{2}{3}\right )^{n-k-s-1}\right ) + \dfrac{2(n-s)^2}{\lambda^{s+2}}\left (1-\frac{1}{\lambda^{n-k-s-1}}\right ).
	\end{align*}
\end{lem}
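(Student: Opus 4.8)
The plan is to bound the failure probability $P$ by localizing at every prime, in the spirit of \cite[Section 6]{EberlyGiesbrechtVillard2000}. By the characterization lemma stated just above, $\bm A_{n-s-1}$ fails to be primitive precisely when it drops rank modulo some prime, so
\[
P \;\le\; \sum_{p\ \mathrm{prime}}\Pr\bigl[\rank_p(\bm A_{n-s-1})<n-s-1\bigr].
\]
Fix a prime $p$. Since $\bm A=\bm A_k$ is primitive it already has rank $k$ over $\integer_p$, and appending the $m:=n-k-s-1$ random rows one at a time, the total rank has to climb from $k$ to $n-s-1$ in exactly $m$ steps, each step raising it by $0$ or $1$. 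Hence $\bm A_{n-s-1}$ is full rank modulo $p$ if and only if \emph{every} step strictly increases the rank, i.e.\ iff for each $i=k,\dots,n-s-2$ the new row $\bm a_{i+1}$ does not lie in $V_{p,i}:=\rspan_p(\bm A_i)$, an $i$-dimensional subspace of $\integer_p^n$ whenever $\rank_p(\bm A_i)=i$.

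The elementary local estimate I would use is the following: for any fixed subspace $V\subseteq\integer_p^n$ of codimension $c$, a vector whose entries are drawn independently and uniformly from $\Lambda$ lies in $V$ with probability at most $\beta_p^{\,c}$, where $\beta_p:=\lceil\lambda/p\rceil/\lambda$ is the largest probability that the reduction modulo $p$ of a uniform element of $\Lambda$ assumes a prescribed residue. Indeed, choose $c$ pivot coordinates so that membership in $V$ forces each of them to equal a fixed affine-linear function of the remaining $n-c$ coordinates; conditioning on those $n-c$ coordinates, the $c$ pivot events become independent and each has probability at most $\beta_p$. Applying this with $c=n-i$ and summing the resulting geometric series over $i=k,\dots,n-s-2$ gives
\[
\Pr\bigl[\rank_p(\bm A_{n-s-1})<n-s-1\bigr]\;\le\;\sum_{j=s+2}^{n-k}\beta_p^{\,j}\;=\;\frac{\beta_p^{\,s+2}\,(1-\beta_p^{\,m})}{1-\beta_p}\;=:\;f(\beta_p),
\]
and $f$ is increasing on $(0,1)$ since $f(\beta)=\sum_{j=0}^{m-1}\beta^{\,s+2+j}$.

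It remains to sum over $p$. For $p\in\{2,3\}$ one checks $\beta_p\le 2/3$ for every integer $\lambda\ge 2$ (with equality only at $p=2,\lambda=3$), so these two primes together contribute at most $2f(2/3)=4\,(2/3)^{s+1}\bigl(1-(2/3)^{n-k-s-1}\bigr)$, which is exactly the first term of the claimed bound. One \emph{cannot}, however, simply sum $f(\beta_p)$ over $p\ge 5$: as soon as $p>\lambda$ one has $\beta_p=1/\lambda$, which does not decay in $p$, so the naive union bound diverges whenever $\lambda$ is small. The large primes must therefore be handled collectively. If $\rank_p(\bm A_{n-s-1})<n-s-1$ for some prime $p$ exceeding the Hadamard bound $(\sqrt n\,\lambda)^{n-s-1}$, then $\bm A_{n-s-1}$ is already rank-deficient over $\rational$, and repeating the incremental argument over $\rational$ (each pivot coordinate must hit a specified value, probability $\le 1/\lambda$) bounds that event by $f(1/\lambda)\le 2\lambda^{-(s+2)}(1-\lambda^{-m})$. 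For $5\le p\le\lambda$ one uses the crude bound $\beta_p\le 1/p+1/\lambda$, while for medium-range primes $\lambda<p\le(\sqrt n\,\lambda)^{n-s-1}$ a rank drop together with full rank over $\rational$ forces a nonzero $(n-s-1)$-minor bounded by the Hadamard bound to be divisible by $p$; accounting for these possibilities through the incremental structure—$O(n-s)$ steps, and a Hadamard-bounded integer has $O(n-s)$ prime factors exceeding $\lambda$—folds everything into the second term $2(n-s)^2\lambda^{-(s+2)}(1-\lambda^{-(n-k-s-1)})$. Adding the two contributions proves Lemma~\ref{lem:pr_n-s-1}.

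The step I expect to be the real obstacle is precisely this large-prime tail. Everything before it is routine linear algebra over $\integer_p$ plus a geometric series; but for $p\gtrsim\lambda$ the per-prime probabilistic bound $\beta_p^{\,n-i}=\lambda^{-(n-i)}$ is useless when summed, so one is forced to exploit the \emph{integrality} of the matrix via Hadamard's bound instead of a probabilistic estimate, and it is this bookkeeping—weighing the number of incremental steps against the number of large prime factors a Hadamard-bounded minor can carry—that is both the technical heart of the proof and the source of the quadratic factor $(n-s)^2$ in the statement.
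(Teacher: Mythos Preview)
Your overall framework---incremental row additions, localization at each prime $p$, and the per-step estimate $\Pr[\bm a_{i+1}\in\rspan_p(\bm A_i)]\le\beta_p^{\,n-i}$---is exactly the paper's approach (both descend from \cite{EberlyGiesbrechtVillard2000}); the only structural difference is that you sum over primes first and steps second, whereas the paper does the reverse. Your identification of the large-prime tail as the delicate point, to be handled by the Hadamard bound rather than a probabilistic estimate, is also correct and matches the paper's treatment in spirit.

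The genuine gap is not where you expect it, however. Your accounting assigns the primes $p\in\{2,3\}$ to the first term $4(2/3)^{s+1}(1-(2/3)^{n-k-s-1})$ and then asserts that \emph{all} remaining primes---in particular the range $5\le p<\lambda$---``fold into the second term'' $2(n-s)^2\lambda^{-(s+2)}(1-\lambda^{-(n-k-s-1)})$. This cannot work: the second term tends to $0$ as $\lambda\to\infty$, whereas the contribution of, say, $p=5$ alone is $f(\beta_5)\ge(1/5)^{s+2}$ for every $\lambda\ge 5$, which does not decay in $\lambda$ at all. So the mid-range primes $5\le p<\lambda$ simply do not fit into the second term, and since your bound $f(\beta_2)+f(\beta_3)\le 2f(2/3)$ already exhausts the entire first-term budget, there is nowhere left to put them.

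The paper resolves this by absorbing \emph{all} primes $p<\lambda$ into the first term. It does so step by step: for each $i$ it shows (via a short case analysis on small $\lambda$, and for $\lambda\ge 8$ via $\beta_p\le 2/(p-1)$ and a $\zeta(2)$-type tail sum over $p\ge 7$) that the total over all $p<\lambda$ satisfies
\[
\sum_{p<\lambda}\beta_p^{\,n-i+1}\ \le\ \Bigl(\tfrac{2}{3}\Bigr)^{n-i+1}+\tfrac{3}{4}\Bigl(\tfrac{1}{3}\Bigr)^{n-i+1}\ \le\ 2\Bigl(\tfrac{2}{3}\Bigr)^{n-i+1},
\]
which after summing over $i$ gives exactly $2f(2/3)$. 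In your ordering this amounts to proving $\sum_{p<\lambda}f(\beta_p)\le 2f(2/3)$, which is the missing ingredient. Only the primes $p\ge\lambda$ (together with the rank-drop-over-$\rational$ event) go into the second term; there the paper bounds, for each step $i$, the number of relevant primes by $i(1+\log_\lambda i)$ via the determinant bound $(i\lambda)^i$, yielding the $(n-s)^2$ factor after summation.
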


We prove Lemma \ref{lem:pr_n-s-1} following the approach of Eberly \textit{et al.} \cite[Section 6]{EberlyGiesbrechtVillard2000}, whose original goal was to bound the expected number of invariant factors for random integer matrices. For $k\le i\le n-s-1$, we define the event 	
\begin{enumerate}[noitemsep, topsep=0pt]
	\item[$\bullet$] $\MDep_i$: There exists at least one prime number $p$ such that $\rank(\bm{A}_i)\le i-1$ over $\integer_p$. 
\end{enumerate}

\begin{rem}
	The definition of $\MDep_{i}$ here is different from that in \cite[Section 6]{EberlyGiesbrechtVillard2000}, where $\MDep_{i}$ denotes the event that there exists at least one prime $p$ such that $\rank(\basis{\bm{a}}{1}{i})\le i-2 $ over $ \integer_p $.
\end{rem}

So the assumption that  $\bm{A}$ is primitive is equivalent to that the event $\neg\MDep_{k}$ happens. Under the assumption, we have
\begin{equation}\label{eq:p_n-3}
	\begin{aligned}
		P  &= \Pr[\MDep_{n-s-1}]\le\Pr[\MDep_{k+1}\vee\MDep_{k+2}\vee\cdots\vee\MDep_{n-s-1}]\\
		&=\Pr[\MDep_{k+1}\vee(\MDep_{k+2}\wedge\neg\MDep_{k+1})\vee\cdots\vee (\MDep_{n-s-1}\wedge\neg\MDep_{n-s-2})] \\
		&=\Pr[(\MDep_{k+1}\wedge\neg\MDep_{k})\vee(\MDep_{k+2}\wedge\neg\MDep_{k+1})\vee\cdots\vee(\MDep_{n-s-1}\wedge\neg\MDep_{n-s-2})]\\
		&\le \sum_{i=k+1}^{n-s-1}\Pr[\MDep_{i}\wedge\neg\MDep_{i-1}] \le  \sum_{i=k+1}^{n-s-1}\Pr[\MDep_{i}\,\vert\,\neg\MDep_{i-1}].
	\end{aligned}
\end{equation}
In order to bound $ \Pr[\MDep_{i}\,\vert\,\neg\MDep_{i-1}] $ for $k+1\le i\le n-s-1$, we need to introduce another useful event:
\begin{enumerate}[noitemsep, topsep=0pt]
	\item[$\bullet$] $\Dep_i$:  The rows of $\bm{A}_i$ are linearly dependent over $\rational$, i.e., $\rank(\bm{A}_i)\le i-1$ over $\rational$. 	
\end{enumerate}
Now we have
\begin{equation} 
	\label{eq:infinite-finite}
	\begin{aligned}
		\Pr[\MDep_{i}\,\vert\,\neg\MDep_{i-1}]  
		=\ &  \Pr[\MDep_{i}\wedge(\Dep_i\vee\neg\Dep_i)\,\vert\,\neg\MDep_{i-1}]\\
		=\ &  \Pr[(\MDep_{i}\wedge\Dep_i)\vee(\MDep_{i}\wedge\neg\Dep_i)\,\vert\,\neg\MDep_{i-1}]\\
		\le\ & \Pr[(\MDep_{i}\wedge\Dep_i)\,\vert\,\neg\MDep_{i-1}] + \Pr[(\MDep_{i}\wedge\neg\Dep_i)\,\vert\,\neg\MDep_{i-1}].
	\end{aligned}
\end{equation}
We first bound 
\[
\Pr[(\MDep_{i}\wedge\Dep_i)\,\vert\,\neg\MDep_{i-1}]\le \Pr[\Dep_i\,\vert\,\neg\MDep_{i-1}].
\] 
The latter is the probability of that $\rank(\bm{A}_{i}) = i -1$ over $\rational$ on the condition that $\bm{A}_{i-1}$ is primitive.  From $\bm{A}_{i-1}$ is primitive, it follows that there must exist $i-1$ columns of $\bm{A}_{i-1}$ such that the submatrix consisting of the first $i-1$ rows of these $i-1$ columns has rank $i-1$ over $\rational$. Denote by the set of indices of these columns $C_{i-1}$. Now $\rank(\bm{A}_{i}) = i -1$ over $\rational$ implies that for all $j\notin C_{i-1}$, the entry $a_{i, j}$ must be a linear combination of $a_{\ell,j}$'s for $\ell = 1,\ldots, i-1$ with the same rational coefficients that determined by those $a_{i, j}$'s with $j\in C_{i-1}$. However, each entry of $\bm{a}_i$ are chosen randomly and independently from the uniform distribution over $\Lambda$. Thus, for each  $j\notin C_{i-1}$, the likelihood that $a_{i,j}$ is equal to such a rational linear combination is either $0$ or $\frac{1}{\lambda}$. Therefore
\begin{align}
	\label{eq:infinite}
	\Pr[(\MDep_{i}\wedge\Dep_i)\,\vert\,\neg\MDep_{i-1}] \le \Pr[\Dep_i\,\vert\,\neg\MDep_{i-1}]\le \left(\frac{1}{\lambda}\right)^{n-i+1}.
\end{align}

To bound $\Pr[(\MDep_{i}\wedge\neg\Dep_i)\,\vert\,\neg\MDep_{i-1}]$, let us consider primes $p<\lambda$ and primes $p\ge \lambda$, respectively.
For that, 	we define the following events:
	\begin{enumerate}[noitemsep, topsep=0pt]
		\item[$\bullet$]  $\MDep_i^{(p)}$: For prime $p$, $\rank(\bm{A}_i)\le i-1$ over $\integer_p$. 
		\item[$\bullet$] $\MDep_i^{(p<\lambda)}$: there exists a prime $p<\lambda$ such that $\rank(\bm{A}_i)\le i-1$ over $\integer_p$.
	\end{enumerate}

\subsection[Case 1]{The case of $p< \lambda$} 

We bound $\Pr[(\MDep_{i}^ {(p<\lambda)}\wedge\Dep_i)\,\vert\,\neg\MDep_{i-1}]$  by $\Pr[\MDep_{i}^ {(p<\lambda)}\,\vert\,\neg\MDep_{i-1}]$.
The latter one is the probability that there exists a prime $p<\lambda$ such that $\rank(\bm{A}_i) = i-1$ over $\integer_p$ under the condition that $\bm{A}_{i-1}$ is primitive. Now we  bound the latter one case by case. 

If  $\lambda =2$, then no such prime $ p $ exists. 

If $\lambda=3$, then $ p=2 $. Furthermore, the event  $ (\MDep_{i}^ {(p<3)}\,\vert\,\neg\MDep_{i-1}) $ is that  $\rank(\bm{A}_{i})=i-1$ over $\integer_2$ assuming $\rank(\bm{A}_{i-1})= i-1 $  {over $\integer_p$ for any prime $p$.  It follows from the assumption} that there exist $i-1$ columns  of $\bm{A}_{i}$ such that the submatrix consisting of the first $i-1$ rows of these $ i-1 $ columns has rank $i-1$ over  $\integer_2$. Denote by the set of indices of these columns $C_{i-1}$. Now, $\rank(\bm{A}_{i})=i-1$ for $ p=2 $ means that for each $j\notin C_{i-1}$, the entry $a_{i, j}$ must be a linear combination of $a_{\ell,j}'s$ over $\integer_2$ for $\ell=1,\ldots, i-1$. However, when $\lambda =3$ and $p=2$, we have
\begin{align*}
	\Pr_{x\sample\Lambda}\left [x\equiv 0\mod p\right ] =\frac{2}{3} 
\end{align*}
and
\begin{align*}
	\Pr_{x\sample\Lambda}\left [x\equiv 1\mod p\right ] = \frac{1}{3},
\end{align*}
so for $\lambda =3$ it follows that
\begin{align*}
	\Pr\left [\MDep_{i}^ {(p<3)}\,\vert\,\neg\MDep_{i-1}\right ] \le  \left (\frac{2}{3}\right )^{n-i+1}.
\end{align*}

If $\lambda=4$, then $p=2$ or $p=3$. Similarly, we have 
\begin{align*}
	\Pr\left [\MDep_{i}^ {(p<4)}\,\vert\,\neg\MDep_{i-1}\right ] \le 2\left (\frac{1}{2}\right )^{n-(i-1)}\le \left (\frac{2}{3}\right )^{n-i+1}
\end{align*}
for $i< n-1$.

If $\lambda=5$, then $p=2$ or $p=3$, and further we obtain
\begin{align*}
	\Pr\left [\MDep_{i}^ {(p<5)}\,\vert\,\neg\MDep_{i-1}\right ]  \le \left (\frac{3}{5}\right )^{n-(i-1)} + \left (\frac{2}{5}\right )^{n-(i-1)}\le \left (\frac{2}{3}\right )^{n-i+1}
\end{align*}
for $i< n-1$.

If $\lambda=6$, then $p=2$ or $p=3$ or $p=5$ and we have
\begin{align*}
	\Pr\left [\MDep_{i}^ {(p<6)}\,\vert\,\neg\MDep_{i-1}\right ] \le \left (\frac{1}{2}\right )^{n-i+1} + 2\left (\frac{1}{3}\right )^{n-(i-1)} \le \left (\frac{2}{3}\right )^{n-i+1}
\end{align*}
for $i< n-1$.

If $\lambda=7$, then $p=2$ or $p=3$ or $p=5$. It follows that
\begin{align*}
	\Pr\left [\MDep_{i}^ {(p<7)}\,\vert\,\neg\MDep_{i-1}\right ] \le \left (\frac{4}{7}\right )^{n-i+1} + \left (\frac{3}{7}\right )^{n-(i-1)} + \left (\frac{2}{7}\right )^{n-(i-1)} \le \left (\frac{2}{3}\right )^{n-i+1}
\end{align*}
for $i< n-1$.

If $\lambda\ge 8$, then $p=2$ or $p=3$ or $p=5$ or $p=7$, etc. Then for $i< n-1$, we have
\begin{align*}
	\Pr\left [\MDep_{i}^ {(p<\lambda)}\,\vert\,\neg\MDep_{i-1}\right ] &\le \left (\frac{1}{2}\right )^{n-i+1} + \left (\frac{3}{8}\right )^{n-(i-1)} + \left (\frac{1}{4}\right )^{n-(i-1)} +\sum_{p\ge 7}\left (\frac{2}{p-1}\right )^{n-i+1}\\
	&\le  \left (\frac{2}{3}\right )^{n-i+1} +  \left (\frac{1}{3}\right )^{n-i+1}\cdot\sum_{p\ge 7}\frac{4}{(p-1)^2}\\
	&\le \left (\frac{2}{3}\right )^{n-i+1} +  4\cdot \left (\frac{1}{3}\right )^{n-i+1}\cdot \sum_{j=6}^{\infty}\frac{1}{j^2}\\
	&\le \left (\frac{2}{3}\right )^{n-i+1} +  4\cdot \left (\frac{1}{3}\right )^{n-i+1}\cdot\left (\zeta(2)-1-\frac{1}{4} - \frac{1}{9}-\frac{1}{16} - \frac{1}{25}\right )\\
	&\le  \left (\frac{2}{3}\right )^{n-i+1} +  \frac{3}{4}\left (\frac{1}{3}\right )^{n-i+1},
\end{align*}
where $\zeta(\cdot)$ is Riemman's zeta function and the fact that $ \ceil{\lambda/p}/\lambda $ does not increase with respect to $\lambda$ and that $ \ceil{\lambda/p}/\lambda \le 2/(p-1)$ for $p<\lambda$ were used. Therefore, for the case of $p< \lambda$ we proved the following 
\begin{prop}\label{prop:plelambda}
	Let $\lambda\ge 2$ be an integer and $k+1\le i\le n-3$, and suppose that the event $\neg\MDep_{i-1}$ happens. The probability that there exists any prime $p<\lambda$ such that $\rank(\bm{A}_i) \le i-1$ over $\integer_p$ 
	is at most  $\left (\frac{2}{3}\right )^{n-i+1} +  \frac{3}{4}\left (\frac{1}{3}\right )^{n-i+1}$. In particular,
	\begin{align}\label{eq:p_lt_lambda}
		\Pr\left [(\MDep_{i}^{(p<\lambda)}\wedge\neg\Dep_i)\,\vert\,\neg\MDep_{i-1}\right ] \le \left (\frac{2}{3}\right )^{n-i+1} +  \frac{3}{4}\left (\frac{1}{3}\right )^{n-i+1}.
	\end{align}
\end{prop}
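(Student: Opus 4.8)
The plan is to drop the harmless conjunct $\neg\Dep_i$, reduce to bounding $\Pr[\MDep_i^{(p<\lambda)}\mid\neg\MDep_{i-1}]$, estimate the contribution of a single prime $p<\lambda$, and then close by a union bound over the finitely many primes below $\lambda$ together with an elementary series estimate. For the reduction, $\MDep_i^{(p<\lambda)}\wedge\neg\Dep_i$ implies $\MDep_i^{(p<\lambda)}$, so it suffices to bound the latter probability. On the conditioning event $\neg\MDep_{i-1}$ the matrix $\bm{A}_{i-1}$ is primitive, hence (by the Lemma) has rank $i-1$ over $\integer_p$ for \emph{every} prime $p$; so for each prime $p$ fix a set $C_{i-1}=C_{i-1}(p)$ of $i-1$ column indices on which the corresponding $(i-1)\times(i-1)$ submatrix of $\bm{A}_{i-1}$ is invertible mod $p$. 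Now fix a prime $p<\lambda$: if $\rank(\bm{A}_i)\le i-1$ over $\integer_p$ then, the top $i-1$ rows already having full rank mod $p$, the row $\bm{a}_i$ is an $\integer_p$-combination of $\bm{a}_1,\dots,\bm{a}_{i-1}$ whose coefficient vector is determined mod $p$ by the $i-1$ entries of $\bm{a}_i$ in the columns $C_{i-1}$; hence the remaining $n-i+1$ entries of $\bm{a}_i$ are each forced into a prescribed residue class mod $p$. Those $n-i+1$ entries are independent and uniform on $\Lambda$ even after conditioning on the entries in $C_{i-1}$, and a uniform draw from $\Lambda$ lands in a fixed class mod $p$ with probability at most $\ceil{\lambda/p}/\lambda$, so
\begin{equation*}
	\Pr\!\left[\MDep_i^{(p)}\,\middle\vert\,\neg\MDep_{i-1}\right]\ \le\ \left(\frac{\ceil{\lambda/p}}{\lambda}\right)^{\!n-i+1},
\end{equation*}
and an elementary computation shows $\ceil{\lambda/p}/\lambda\le 2/(p-1)$ for every prime $p<\lambda$.

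A union bound over the finitely many primes $p<\lambda$ then gives
\begin{equation*}
	\Pr\!\left[\MDep_i^{(p<\lambda)}\,\middle\vert\,\neg\MDep_{i-1}\right]\ \le\ \sum_{p<\lambda}\left(\frac{\ceil{\lambda/p}}{\lambda}\right)^{\!n-i+1},
\end{equation*}
and I would estimate the right-hand side in two regimes. For $\lambda\in\{2,\dots,7\}$ only the primes $2,3,5$ can occur, and a direct case-by-case check of the at most three fractions $\ceil{\lambda/p}/\lambda$ (all of which are $\le 2/3$), using $n-i+1\ge 4$, shows the sum of their $(n-i+1)$-th powers is at most $(2/3)^{n-i+1}$. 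For $\lambda\ge 8$ I would isolate the terms $p=2,3,5$ — each with a base that is a fixed constant below $2/3$ — from the tail $\sum_{p\ge 7}(2/(p-1))^{n-i+1}$, bound the tail by pulling out a geometric factor and comparing $\sum_{p\ge 7}1/(p-1)^2$ with the tail of $\zeta(2)$, and add up; with $n-i+1\ge 4$ the totals drop below $(2/3)^{n-i+1}+\tfrac34(1/3)^{n-i+1}$. Finally, \eqref{eq:p_lt_lambda} follows because the event $\MDep_i^{(p<\lambda)}\wedge\neg\Dep_i$ is contained in $\MDep_i^{(p<\lambda)}$.

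\textbf{The main obstacle} is not conceptual — the single-prime estimate and the union bound are routine — but lies in the final step: squeezing the clean constants $2/3$ and $\tfrac34$ out of the raw prime sum and, above all, making them hold at the \emph{smallest admissible exponent}. This is precisely why the proposition is stated for $i\le n-3$, i.e.\ $n-i+1\ge 4$: that lower bound on the exponent is exactly what powers estimates of the shape $3\cdot(1/2)^{n-i+1}\le(2/3)^{n-i+1}$ and lets the $p\ge 7$ tail sit under $\tfrac34(1/3)^{n-i+1}$. Because for $\lambda\ge 8$ every base on the left-hand side is bounded away from $2/3$, the left-hand side decays strictly faster in the exponent than $(2/3)^{n-i+1}$, so once the inequality is verified numerically at $n-i+1=4$ it holds for all larger exponents; the only point demanding a little vigilance is a prime $p$ just below $\lambda$, where $\ceil{\lambda/p}/\lambda$ can be as large as $2/(p-1)$ rather than close to $1/p$, which is what the use of $2/(p-1)$ in place of $1/p$ is there to absorb.
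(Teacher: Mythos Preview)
Your proposal is correct and follows essentially the same route as the paper: the paper likewise drops $\neg\Dep_i$, bounds the single-prime probability by $(\lceil\lambda/p\rceil/\lambda)^{n-i+1}$ via exactly your residue-class argument, takes a union bound over primes $p<\lambda$, disposes of $\lambda\in\{2,\dots,7\}$ by a direct case-by-case check giving $(2/3)^{n-i+1}$, and for $\lambda\ge 8$ separates $p\in\{2,3,5\}$ from the tail $\sum_{p\ge 7}(2/(p-1))^{n-i+1}$, which it controls by pulling out $(1/3)^{n-i+1}$ and comparing $\sum_{p\ge 7}1/(p-1)^2$ with the tail of $\zeta(2)$. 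Your monotonicity-in-the-exponent shortcut (check $n-i+1=4$ and propagate) is a mild repackaging of the paper's direct term-by-term estimates, but the substance is identical.
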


\subsection[Case 2]{The case of $p\ge \lambda$} 
Now our goal is to bound 
\begin{align}\label{eq:pgtlambda}
	\begin{aligned}
		\Pr\left [\left (\MDep_{i}^{(p\ge\lambda)}\wedge\neg\Dep_i\right )\,\bigg\vert\,\neg\MDep_{i-1}\right ]  &= \Pr\left [\left (\bigvee_{p\ge\lambda}\MDep_{i}^{(p)}\right )\wedge\neg\Dep_i\,\bigg\vert\,\neg\MDep_{i-1}\right ]\\
		&=\Pr\left [\bigvee_{p\ge\lambda}\left (\MDep_{i}^{(p)}\wedge\neg\Dep_i\right )\,\bigg\vert\,\neg\MDep_{i-1}\right ]\\
		&\le \sum_{p\ge\lambda} \Pr\left [\left (\MDep_{i}^{(p)}\wedge\neg\Dep_i\right )\,\bigg\vert\,\neg\MDep_{i-1}\right ]
	\end{aligned}
\end{align}
where $p$ ranges all primes at least $\lambda$.

If $p\ge\lambda$ is a fixed prime, then the probability that $\rank(\bm{A}_i) \le i-1$ over $\integer_p$  {under the condition  $\rank(\bm{A}_{i-1})=i-1$ over $\rational$} is at most $\left (\frac{1}{\lambda}\right )^{n-i+1}$, since the probability that a value chosen  randomly and independently from the uniform distribution on $\Lambda = \integer\cap[0,\lambda)$ equals a given value in $\integer_p$ is either $0$ or $\frac{1}{\lambda}$, and hence at most $\frac{1}{\lambda}$. So we have 
\begin{align}\label{eq:mdepp}
	\Pr\left [(\MDep_{i}^{(p)}\wedge\neg\Dep_i)\,\vert\,\neg\MDep_{i-1}\right ] \le
	\Pr\left [\MDep_{i}^{(p)}\,\vert\,\neg\MDep_{i-1}\right ] \le \left (\frac{1}{\lambda}\right )^{n-i+1}.
\end{align}
Furthermore, we claim that for all primes $p> (i\cdot\lambda)^i$
\begin{align}\label{eq:finiteness}
	\Pr\left [\left (\MDep_{i}^{(p)}\wedge\neg\Dep_i\right )\,\bigg\vert\,\neg\MDep_{i-1}\right ] = 0.
\end{align}

In fact, if there exists a prime $p>(i\cdot\lambda)^i$ that makes the event $(\MDep_{i}^{(p)}\wedge\neg\Dep_i)$ happen assuming that $\bm{A}_{i-1}$ is primitive, then $p$ must divide all $i\times i$ minors of $\bm{A}_i$. However, although $a_{i,j}$'s $(j=1,2,\ldots, n)$ are  chosen randomly and independently from the uniform distribution on $\Lambda$, once $\neg\Dep_i$ happens, there exists at least one nonsingular $i\times i$ submatrix in $\bm{A}_i$ over $\rational$. The absolute value of the determinant of such a submatrix is  at most $i!\cdot\lambda^{i} \le (i\cdot\lambda)^i$. Combining $p>(i\cdot\lambda)^i$, $p$ divides the determinant, and the absolute value of the determinant is  bounded by $(i\cdot\lambda)^i$, we have the determiant must be zero, which contradicts with non-singularity of the submatrix.  Also, note that  the bound $(i\cdot\lambda)^i$ is independent of the choice of  $a_{i,j}$'s when such  $a_{i,j}$'s make $\neg\Dep_i$ happen.  (Instead, when such  $a_{i,j}$'s do not make $\neg\Dep_i$  happen, the probability $\Pr[(\MDep_{i}\wedge\Dep_i)\,\vert\,\neg\MDep_{i-1}]$ is already discussed in \eqref{eq:infinite}.)

{Now, the number of possible primes $p\ge\lambda$ that divides an integer $d\le (i\cdot\lambda)^i$ is at most $\log_{\lambda} (i\cdot\lambda)^i\le i(1+\log_{\lambda}i)$, where $d$ is in fact the greatest common divisor of all nonsingular $i\times i$ submatrix of $\bm{A}_i$ over $\rational$. Although the integer $d$ may vary as  $a_{i,j}$'s are chosen randomly and independently from the uniform distribution on $\Lambda$, it is always bounded by $(i\cdot\lambda)^i$. Thus, the number of possible primes that we need to consider is at most $i(1+\log_{\lambda}i)$.  Thus, combining Eq. (\ref{eq:pgtlambda}-\ref{eq:finiteness}), we have}

\begin{align}\label{eq:p_g_t_lambda}
	\begin{aligned}
		\Pr\left [(\MDep_{i}^{(p\ge\lambda)}\wedge\neg\Dep_i)\,\vert\,\neg\MDep_{i-1}\right ]		&\le \sum_{p\ge\lambda} \Pr\left [\left (\MDep_{i}^{(p)}\wedge\neg\Dep_i\right )\,\bigg\vert\,\neg\MDep_{i-1}\right ] \le 	\left (i(1+\log_{\lambda}i)\right )\cdot\left (\frac{1}{\lambda}\right )^{n-i+1}.
	\end{aligned}
\end{align}
Therefore, it follows from  Eq. \eqref{eq:p_lt_lambda} and \eqref{eq:p_g_t_lambda} that 

\begin{align}\label{eq:plesslambda}
	\begin{aligned}
		&\Pr\left [(\MDep_{i}\wedge\neg\Dep_i)\,\vert\,\neg\MDep_{i-1}\right ] \\=&\Pr\left [(\MDep_{i}^{(p<\lambda)}\wedge\neg\Dep_i)\,\vert\,\neg\MDep_{i-1}\right ] + \Pr\left [(\MDep_{i}^{(p\ge\lambda)}\wedge\neg\Dep_i)\,\vert\,\neg\MDep_{i-1}\right ]\\
		\le & \left (\frac{2}{3}\right )^{n-i+1} +  \frac{3}{4}\left (\frac{1}{3}\right )^{n-i+1}+\left (i(1+\log_{\lambda}i)\right )\cdot\left (\frac{1}{\lambda}\right )^{n-i+1}.
	\end{aligned}
\end{align}

\paragraph{Proof of Lemma \ref{lem:pr_n-s-1}} It follows from  Eq. \eqref{eq:infinite-finite}, \eqref{eq:infinite}, and \eqref{eq:plesslambda} that for integer $\lambda\ge 2$,  
\begin{align*}
	\Pr[\MDep_{i}\vert\neg\MDep_{i-1}]\le\ & \Pr[(\MDep_{i}\wedge\Dep_i)\,\vert\,\neg\MDep_{i-1}] + \Pr[(\MDep_{i}\wedge\neg\Dep_i)\,\vert\,\neg\MDep_{i-1}]\\
	\le\ &\left (\frac{1}{\lambda}\right )^{n-i+1} + \left (\frac{2}{3}\right )^{n-i+1} +  \frac{3}{4}\left (\frac{1}{3}\right )^{n-i+1}+\left (i(1+\log_{\lambda}i) \right )\cdot\left (\frac{1}{\lambda}\right )^{n-i+1}\\
	\le\ & \left (\frac{2}{3}\right )^{n-i+1} +  \frac{3}{4}\left (\frac{1}{3}\right )^{n-i+1}+\left (i(1+\log_{\lambda}i)+1 \right )\cdot\left (\frac{1}{\lambda}\right )^{n-i+1}.
\end{align*}
Therefore, under the assumption that $\neg\MDep_{k}$ happens, it follows from Eq. \eqref{eq:p_n-3} that 
\begin{align*}
	P & = \Pr[\MDep_{n-s-1}]  \le   \sum_{i=k+1}^{n-s-1}\Pr[\MDep_{i}\vert\neg\MDep_{i-1}]\\
	&\le  \sum_{i=k+1}^{n-s-1}\left (\left (\frac{2}{3}\right )^{n-i+1} +  \frac{3}{4}\left (\frac{1}{3}\right )^{n-i+1}+\left (i(1+\log_{\lambda}i) + 1 \right )\cdot\left (\frac{1}{\lambda}\right )^{n-i+1}\right )\\
	&\le  2\cdot\sum_{i=k+1}^{n-s-1}\left (\frac{2}{3}\right )^{n-i+1} + (n-s)(1+\log_{\lambda}(n-s-1))\sum_{i=k+1}^{n-s-1}\frac{1}{\lambda^{n-i+1}}\\
	&\le 4 \left (\frac{2}{3}\right )^{s+1}\left (1-\left (\frac{2}{3}\right )^{n-k-s-1}\right ) + \dfrac{(n-s)^2}{(\lambda-1)\lambda^{s+1}}\left (1-\frac{1}{\lambda^{n-k-s-1}}\right ) \\
	&\le 4 \left (\frac{2}{3}\right )^{s+1}\left (1-\left (\frac{2}{3}\right )^{n-k-s-1}\right ) + \dfrac{2(n-s)^2}{\lambda^{s+2}}\left (1-\frac{1}{\lambda^{n-k-s-1}}\right ) .\qed
\end{align*}

Now, Theorem \ref{thm:main} is a direct consequence of Lemma \ref{lem:pr_n-s-1}. 

{\begin{rem}
		According to the above proof, we use localization at each prime $p$ and count the number of possibilities over $\integer_p$. So the set $\Lambda$ can be any set containing $\lambda$ contiguous integers, as in \cite[Section 6]{EberlyGiesbrechtVillard2000}. 
\end{rem}}

\section{Unimodular matrix completion}\label{sec:umc}

When completing a given $k\times n$ primitive matrix to an $n\times n$ unimodular matrix, we first complete the input matrix to an $n\times n$  matrix with $n-k$  vectors whose entries are chosen randomly and uniformly from the uniform distribution over $\Lambda$, and then rectify the last four vectors via the determinant reduction technique \cite[Section 15]{Storjohann2003}. Repeat the above process until the resulting matrix is unimodular.

{Eq. \eqref{eq:simpler_bnd} (a consequence of Theorem \ref{thm:main}) with $s=3$ and $\lambda \ge 3(n-3)^{2/5}$} guarantees  that the completed $(n-4) \times n$ matrix is still primitive with  a probability at least a constant, say $0.2$. Then after a constant number of repeat, it is expected that we will obtain at least one $(n-4)\times 4$ primitive matrix, and the determinant reduction technique applies to produce a unimodular matrix.

{
	\subsection{Hermite normal form}
	
	A matrix $\bm{A}\in\integer^{m\times n}$ of rank $r$ has a (row) \textit{Hermite normal form} (HNF) if there exists a square unimodular matrix $\bm{U}$ such that $\bm{H}=\bm{UA}$ satisfies the following:
	there exist indices $1\le i_1 < i_2 < \cdots < i_r\le n$ such that for $j=1,\ldots, r$, $h_{j, i_j}>0$, $h_{j, k}=0$ if $k<i_j$ and $0\le h_{\ell, i_j} < h_{j, i_j}$ if $\ell<j$; the bottom $m-r$ rows of $\bm{H}$ are zero.
	
	The matrix $\bm{H}$ is unique, denoted by $\HNF(\bm{A})$, although the unimodular tranformations are usually not unique. With the definition of HNF, we can character the primitive matrix as the following:
	\begin{lem}\label{lem:primitive_hnf}
		A $k\times n$ integer matrix $\bm{A}$ with $k<n$ is primitive iff $\HNF(\bm{A}^\T) = \binom{\bm{I}_k}{\bm{0}}$.
	\end{lem}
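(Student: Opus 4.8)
The plan is to transpose the HNF identity, turning it into a statement about right-multiplying $\bm{A}$ by a unimodular matrix, from which primitivity can be read off directly. By definition, $\HNF(\bm{A}^\T) = \binom{\bm{I}_k}{\bm{0}}$ means there is a unimodular $\bm{U}\in\GL_n(\integer)$ with $\bm{U}\bm{A}^\T = \binom{\bm{I}_k}{\bm{0}}$; transposing and putting $\bm{V}:=\bm{U}^\T\in\GL_n(\integer)$, this is equivalent to $\bm{A}\bm{V} = (\bm{I}_k\mid\bm{0})$. Hence it suffices to prove: $\bm{A}$ is primitive iff $\bm{A}\bm{V} = (\bm{I}_k\mid\bm{0})$ for some unimodular $\bm{V}$.

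First I would dispatch the ``if'' direction, which is immediate: given $\bm{A}\bm{V}=(\bm{I}_k\mid\bm{0})$ and $\bm{y}\in\rational^k$ with $\bm{y}\bm{A}\in\integer^n$, multiplying on the right by the integer matrix $\bm{V}$ gives $(\bm{y}\mid\bm{0}) = \bm{y}(\bm{I}_k\mid\bm{0}) = \bm{y}\bm{A}\bm{V}\in\integer^n$, so $\bm{y}\in\integer^k$. (Equivalently, $\bm{A}\bm{V}=(\bm{I}_k\mid\bm{0})$ exhibits $\bm{A}$ as the first $k$ rows of the unimodular matrix $\bm{V}^{-1}$, and any matrix that extends to a unimodular one is trivially primitive.)

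For the ``only if'' direction, suppose $\bm{A}$ is primitive. I would first observe $\rank\bm{A}=k$ over $\rational$: otherwise some nonzero $\bm{y}_0\in\rational^k$ has $\bm{y}_0\bm{A}=\bm{0}$, and rescaling $\bm{y}_0$ to a primitive integer vector $\bm{y}$ gives $\tfrac12\bm{y}\notin\integer^k$ with $(\tfrac12\bm{y})\bm{A}=\bm{0}\in\integer^n$, contradicting primitivity. With full row rank, the defining staircase shape of the HNF forces $\HNF(\bm{A}^\T)=\binom{\bm{H}'}{\bm{0}}$ with $\bm{H}'\in\integer^{k\times k}$ upper triangular, positive diagonal $h_{11},\dots,h_{kk}$, and strictly reduced entries above the diagonal; transposing, $\bm{A}\bm{V}=(\bm{H}'^\T\mid\bm{0})$ with $\bm{V}=\bm{U}^\T$ unimodular and $\bm{H}'^\T$ lower triangular. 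If some $h_{jj}>1$, then $\det\bm{H}'^\T=\prod_j h_{jj}>1$, so the lattice $\integer^k\bm{H}'^\T$ spanned by the rows of $\bm{H}'^\T$ is a proper sublattice of $\integer^k$; choosing $\bm{z}\in\integer^k$ outside it and setting $\bm{y}:=\bm{z}(\bm{H}'^\T)^{-1}\in\rational^k\setminus\integer^k$, one gets $\bm{y}\bm{A}\bm{V}=(\bm{z}\mid\bm{0})\in\integer^n$, hence $\bm{y}\bm{A}\in\integer^n$ after multiplying by the integer matrix $\bm{V}^{-1}$, while $\bm{y}\notin\integer^k$, contradicting primitivity. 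So every $h_{jj}=1$, and then $0\le h_{\ell j}<h_{jj}=1$ forces the off-diagonal entries to vanish, i.e.\ $\bm{H}'=\bm{I}_k$.

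The one point needing care is pinning down the precise shape of $\HNF(\bm{A}^\T)$: one must establish full row rank of $\bm{A}$ before invoking the staircase form, and the crux is that the product of the HNF diagonal entries equals the gcd of the $k\times k$ minors of $\bm{A}$ (a determinantal divisor, unchanged by the unimodular transformation), which is $1$ exactly when $\bm{A}$ is primitive; the explicit lattice witness above is just a concrete way to see this. As a shortcut, one could instead quote the classical fact, already recalled in the introduction, that every primitive matrix extends to a unimodular matrix --- this immediately supplies a unimodular $\bm{V}^{-1}$ with top block $\bm{A}$, i.e.\ $\bm{A}\bm{V}=(\bm{I}_k\mid\bm{0})$, closing the argument.
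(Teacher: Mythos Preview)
Your proof is correct. The paper does not actually give its own proof of this lemma; it simply states the result and refers the reader to \cite{MazeRosenthalWagner2011}. So there is no in-paper argument to compare against --- you have supplied a complete self-contained proof where the authors chose to cite.

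For what it is worth, your argument is the standard one and exactly what the cited reference would give: translate the HNF condition into $\bm{A}\bm{V}=(\bm{I}_k\mid\bm{0})$ for some unimodular $\bm{V}$, then use the definition of primitivity on both sides. The only mild redundancy is the final paragraph, where you both give the explicit lattice-witness argument \emph{and} suggest quoting the extension-to-unimodular fact from the introduction as a shortcut; either one alone suffices, and the direct argument via the diagonal entries of $\bm{H}'$ is cleaner since it avoids circularity (the extension fact is itself usually proved via this lemma).
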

	One can find a proof of this lemma in, e.g., \cite{MazeRosenthalWagner2011}.
}

\subsection{Determinant reduction}\label{subsec:det_red}
Given a nonsingular $\bm{A}\in\integer^{n\times n}$, the determinant reduction introduced in \cite[Section 15]{Storjohann2003} computes a matrix $\bm{B}\in\integer^{n\times n}$, obtained from $\bm{A}$ by replacing the last column, such that the last diagonal entry in the Hermite normal form of $\bm{B}$ is one. 
The determinant reduction was originally presented for integral polynomial matrix 
in \cite[Section 15]{Storjohann2003}, with a worked example for integer matrix. Here, we give full details and analyses for the integer matrix case.

\begin{algorithm}
	\caption{(Determinant reduction)}
	\label{algo:detRed}
	\begin{algorithmic}[1]
		\REQUIRE A nonsingular integer matrix $\bm{A}\in\integer^{n\times n}$. 
		\ENSURE A matrix $\bm{B}\in \integer^{n\times n}$, with $\bm{B}$ equal to $\bm{A}$ except for possibly the last column, $\norm{\bm{B}}\le n^2\norm{\bm{A}}$, and the last diagonal entry of $\HNF(\bm{B})$ equal to one.
		%\STATE\label{algostep:compute_u} Compute $\bm{H}:=\HNF(\bm{A})=\bm{UA}$ and set $\bm{u}^\T$ to be the last row of $\bm{U}$. \comment{$\bm{u}$ must be primitive.}
		\STATE\label{algostep:compute_u} Set $\bm{C}_{n-1}$ to be the matrix consisting of the first $n-1$ columns of $\bm{A}$. Compute a primitive vector $\bm{u}\in\integer^n$ such that  {$\bm{u}\bm{C}_{n-1}=\bm{0}$}. 
		\STATE\label{algostep:gcd} Call an extended gcd algorithm to compute $\bm{b}\in\integer^n$ such that  { $\bm{u}\bm{b}^\T = 1$.}
		\STATE\label{algostep:sr_ini} Set $\overline{\bm{A}}$ to be the $(n-1)\times (n-1)$ principal submatrix of $\bm{A}$, and $\overline{\bm{b}}$ the vector consisting of the first $n-1$ entries of $\bm{b}$. 
		\STATE\label{algostep:compute_q} Compute  {$\overline{\bm{q}}^\T :=\round{\overline{\bm{A}}^{-1}\overline{\bm{b}}^\T}$}. //For a vector $\bm{v}$, $\round{\bm{v}}$ means each entry rounded.
		\STATE\label{algostep:sr_end} Set $\bm{q}:=(\overline{\bm{q}}\mc 0)$ and set $\bm{B}$ to be $\bm{A}$ except replacing the last column by  {$\bm{b}^\T-\bm{A}\bm{q}^\T$}.
		\RETURN $\bm{B}$.
	\end{algorithmic}
\end{algorithm}

{
	\begin{prop}\label{prop:det_red}
		Given a nonsingular $n\times n$ integer matrix $\bm{A}$, Algorithm \ref{algo:detRed} correctly computes an $n\times n$ integer matrix $\bm{B}$ within $O(n^{\omega+\varepsilon}\log^{1+\varepsilon}\norm{\bm{A}})$ bit operations, where $\bm{B}$ satisfies the following:  
		\begin{itemize}
			\item $\bm{B}$ equals $\bm{A}$ except for possibly the last column,
			\item the last diagonal entry of $\HNF(\bm{B})$ equals one,
			\item $\norm{\bm{B}} = O(n^2\norm{\bm{A}})$.
		\end{itemize}
	\end{prop}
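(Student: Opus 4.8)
The plan is to establish the three claimed properties of $\bm{B}$ one at a time and then bound the running time. Throughout, let $\bm{a}_n$ be the last column of $\bm{A}$ and let $\bm{c}:=\bm{b}^\T-\bm{A}\bm{q}^\T$ be the last column of $\bm{B}$. Recall that the left kernel of $\bm{C}_{n-1}$ over $\integer$ is spanned by the primitive rescaling of the vector $\bm{v}$ of signed $(n-1)\times(n-1)$ minors of $\bm{C}_{n-1}$; thus the vector $\bm{u}$ of Step~\ref{algostep:compute_u} satisfies $\bm{u}=\pm\bm{v}/g$ with $g:=\gcd(\bm{v})>0$, and Laplace expansion along the last column gives $g\,\bm{u}\bm{x}=\pm\det[\bm{C}_{n-1}\mid\bm{x}]$ for every $\bm{x}\in\integer^n$; in particular $\bm{u}\bm{C}_{n-1}=\bm{0}$ and $\bm{u}\bm{A}=(\bm{u}\bm{C}_{n-1}\mc\bm{u}\bm{a}_n)=(\bm{0}\mc\bm{u}\bm{a}_n)$. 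The first property, that $\bm{B}$ coincides with $\bm{A}$ except possibly in the last column, is immediate from Step~\ref{algostep:sr_end}.

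For the second property I would use the identity $\bm{u}\bm{B}=\bm{e}_n$: its leading $n-1$ coordinates are $\bm{u}\bm{C}_{n-1}=\bm{0}$, and because $\bm{q}=(\overline{\bm{q}}\mc 0)$ the last one is $\bm{u}\bm{c}=\bm{u}\bm{b}^\T-(\bm{u}\bm{A})\bm{q}^\T=\bm{u}\bm{b}^\T-0=1$ by Step~\ref{algostep:gcd}. Hence $\bm{e}_n$ lies in the integer row lattice $L$ of $\bm{B}$; moreover $\det\bm{B}=\det[\bm{C}_{n-1}\mid\bm{c}]=\pm g\,\bm{u}\bm{c}=\pm g\neq 0$, so $\bm{B}$ is nonsingular and its row Hermite normal form $\bm{H}$ is upper triangular with positive diagonal and last row $(0,\dots,0,h_{nn})$. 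By the triangular shape of $\bm{H}$ the vectors of $L$ supported only on the last coordinate form exactly $\integer\,(0,\dots,0,h_{nn})$, and since this set contains $\bm{e}_n$ we conclude $h_{nn}=1$.

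The size bound $\norm{\bm{B}}=O(n^2\norm{\bm{A}})$ is the part I expect to be the main obstacle: a priori the entries of $\bm{c}$ can be as large as $\norm{\bm{b}}_\infty$, which is only bounded by $\norm{\bm{u}}_\infty\le(n-1)!\,\norm{\bm{A}}^{n-1}$, and although Step~\ref{algostep:compute_q} is designed to shrink the first $n-1$ entries of $\bm{c}$, it does nothing \emph{directly} to the last entry $c_n$. The resolution is a Schur-complement identity. Assume $\overline{\bm{A}}$ is invertible --- which can always be arranged by a preliminary row permutation of $\bm{A}$, undone at the end and affecting none of the three properties --- and set $\bm{\delta}:=\overline{\bm{A}}^{-1}\overline{\bm{b}}^\T-\overline{\bm{q}}^\T$, so $\norm{\bm{\delta}}_\infty\le\tfrac12$. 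Using $q_n=0$, the first $n-1$ entries of $\bm{c}$ equal $\overline{\bm{b}}^\T-\overline{\bm{A}}\,\overline{\bm{q}}^\T=\overline{\bm{A}}\bm{\delta}$, of sup-norm at most $\tfrac{n-1}{2}\norm{\bm{A}}$, and $c_n=\big(b_n-\bm{a}^{(n)}\overline{\bm{A}}^{-1}\overline{\bm{b}}^\T\big)+\bm{a}^{(n)}\bm{\delta}$, where $\bm{a}^{(n)}$ denotes the first $n-1$ entries of the last row of $\bm{A}$. The key point is that $b_n-\bm{a}^{(n)}\overline{\bm{A}}^{-1}\overline{\bm{b}}^\T$ is the Schur complement of $\overline{\bm{A}}$ in the matrix with columns $\bm{C}_{n-1}$ then $\bm{b}^\T$, hence equals $\det[\bm{C}_{n-1}\mid\bm{b}^\T]/\det\overline{\bm{A}}=\pm g\,\bm{u}\bm{b}^\T/\det\overline{\bm{A}}=\pm g/\det\overline{\bm{A}}$; since $g=\gcd(\bm{v})$ divides $\det\overline{\bm{A}}$ up to sign (the latter being the minor of $\bm{C}_{n-1}$ obtained by deleting the last row), this has absolute value at most $1$. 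Therefore $\norm{\bm{c}}_\infty\le\tfrac{n-1}{2}\norm{\bm{A}}+1$, so $\norm{\bm{B}}\le n^2\norm{\bm{A}}$ (indeed much less).

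Finally, for the running time the dominant work is Step~\ref{algostep:compute_u} (a primitive null-space vector of the $n\times(n-1)$ integer matrix $\bm{C}_{n-1}$, entries bounded by $\norm{\bm{A}}$) and Step~\ref{algostep:compute_q} (a rational linear solve with the $(n-1)\times(n-1)$ integer matrix $\overline{\bm{A}}$, followed by rounding); both run in $O(n^{\omega+\varepsilon}\log^{1+\varepsilon}\norm{\bm{A}})$ bit operations by fast integer linear algebra, and the extended gcd on the $n$-entry vector $\bm{u}$, the remaining matrix--vector products, and the submatrix extractions do not dominate. As a by-product the analysis shows $\lvert\det\bm{B}\rvert=g$ is the gcd of the $(n-1)\times(n-1)$ minors of $\bm{C}_{n-1}$, which explains the name ``determinant reduction'' and is exactly the feature exploited in the unimodular completion algorithm.
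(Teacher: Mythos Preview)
Your proof is correct and arrives at all three properties as well as the claimed complexity, but your route to the size bound on the last entry $c_n$ differs from the paper's. The paper exploits the relation $\bm{u}\bm{c}=1$ directly: writing $\bm{u}=(\overline{\bm{u}},u_n)$ and assuming (after a row permutation) that $\norm{\bm{u}}_\infty=|u_n|$, one gets $u_n c_n = 1-\overline{\bm{u}}(\overline{\bm{b}}^\T-\overline{\bm{A}}\overline{\bm{q}}^\T)$ and hence $|c_n|\le 1+\tfrac{(n-1)^2}{2}\norm{\bm{A}}$. Your Schur-complement identity $b_n-\bm{a}^{(n)}\overline{\bm{A}}^{-1}\overline{\bm{b}}^\T=\pm g/\det\overline{\bm{A}}$ is a genuinely different observation; combined with $g\mid\det\overline{\bm{A}}$ it yields the sharper $|c_n|\le 1+\tfrac{n-1}{2}\norm{\bm{A}}$, improving the paper's bound by a factor of roughly $n$ (so in fact $\norm{\bm{B}}=O(n\norm{\bm{A}})$ rather than $O(n^2\norm{\bm{A}})$). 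Both arguments need $\overline{\bm{A}}$ nonsingular and handle this the same way, via a row permutation; note that this is equivalent to the paper's hypothesis $u_n\neq 0$, since $u_n=\pm\det\overline{\bm{A}}/g$. Your argument for the second property---that $\bm{e}_n$ lies in the row lattice of $\bm{B}$ forces $h_{nn}=1$---is also slightly more direct than the paper's, which instead identifies $\bm{u}$ as the last row of the unimodular transformation to Hermite form.
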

}

\begin{proof}
	{The singularity of $\bm{A}$} implies that the vector $\bm{u}$ produced in Step \ref{algostep:compute_u} is the unique primitive vector in  {$\{\bm{x}\in\integer^n:  \bm{xC}_{n-1} =\bm{0}\}$}. Let $\bm{U}$ be an arbitrary unimodular matrix such that  {$\bm{H}:=\HNF(\bm{A})=\bm{UA}$}. Then the uniqueness of $\bm{u}$ implies that $\bm{u}$ must be the last row of $\bm{U}$. By construction of $\bm{b}$ in Step \ref{algostep:gcd}, the matrix obtained from $\bm{A}$ by replacing the last column with  {$\bm{b}^\T$} will have $\HNF$ with the last diagonal entry one.  The whole algorithm can be expressed as the following equation
	\begin{align*}
		\bm{B} = \begin{pmatrix}
			\overline{\bm{A}} &\overline{\bm{b}} {^\T}\\
			\overline{\bm{a}} & b_{n}
		\end{pmatrix}\cdot\begin{pmatrix}
			\bm{I}_{n-1}& -\overline{\bm{q}} {^\T}\\
			&1
		\end{pmatrix} =\begin{pmatrix}
			\overline{\bm{A}} &\overline{\bm{b}} {^\T} - \overline{\bm{A}}\overline{\bm{q}} {^\T}\\
			\overline{\bm{a}} & b_{n} - \overline{\bm{a}}\,\overline{\bm{q}} {^\T}
		\end{pmatrix},
	\end{align*}
	{where $\overline{\bm{A}}$ is the $(n-1)\times (n-1)$ principal submatrix of $\bm{A}$, $\overline{\bm{b}}$ is the vector consisting of the first $n-1$ entries of $\bm{b}$, and $\bar{\bm{a}}$ is the vector consisting of the first $n-1$ entries of the last row of $\bm{A}$}. Step \ref{algostep:sr_ini}-\ref{algostep:sr_end} are to reduce the size of $\overline{\bm{b}}$ by the columns of $\overline{\bm{A}}$. It follows from  {$\bm{u}\bm{C}_{n-1}=\bm{0}$}  that all entries of  {$\bm{u}\bm{A}$} are zero except for possibly the last one. However, the last entry of $\bm{q}$ equals zero, so  we have $\bm{u}\bm{Aq}^ {\T}=0$, and hence  
	\begin{align}\label{eq:correctness}
		\bm{u} (\bm{b}^ {\T}-\bm{Aq}^ {\T}) = \bm{u}\bm{b}^ {\T}=1,
	\end{align}
	which  {implies that $\bm{UB}$ is exactly $\bm{H}$ except for the last column replaced by $(*,\ldots, *, 1)^\T$, and hence  the last diagonal of $\HNF(\bm{B})$ must be one.} 
	
	We now consider the size of $\bm{b}^ {\T}-\bm{Aq}^ {\T}$. First, 
	\begin{align}\label{eq:bnd_infty}
		\begin{aligned}
			\Norm{\overline{\bm{b}}^ {\T} - \overline{\bm{A}}\overline{\bm{q}}^ {\T} }_\infty &= \Norm{\overline{\bm{b}}^ {\T} - \overline{\bm{A}}\round{\overline{\bm{A}}^{-1}\overline{\bm{b}}^ {\T}}}_\infty\\
			&=\Norm{\overline{\bm{b}}^ {\T} - \overline{\bm{A}}\left(\overline{\bm{A}}^{-1}\overline{\bm{b}}^ {\T}+\bm{\varepsilon}^ {\T}\right) }_\infty\le 
			\Norm{\overline{\bm{A}}\bm{\varepsilon}^ {\T}}_\infty \le\frac{n-1}{2}\norm{\overline{\bm{A}}},
		\end{aligned}
	\end{align}
	where $\norm{\cdot}_\infty$ is the $\ell_\infty$-norm of a vector and $\norm{\bm{\varepsilon}^ {\T}}_\infty\le\frac{1}{2}$ is used. In fact, Step \ref{algostep:compute_q} is essentially the same as Babai's rounding algorithm \cite{Babai1985}. Denote $\bm{u}=(\overline{\bm{u}}, u_n)$. Without loss of generality, we can assume that 
	\begin{align}\label{eq:assumption_on_u}
		\norm{\bm{u}}_\infty = u_n.
	\end{align} Otherwise there must exist a permutation matrix $\bm{P}$ such that $\bm{u}\bm{P}\bm{P}^{-1}\bm{C}_{n-1}=\bm{0}$, $\bm{u}\bm{P}$ satisfies Eq. \eqref{eq:assumption_on_u}, and  $\bm{P}^{-1}\bm{C}_{n-1}$ still corresponds to the first $n-1$ columns of $\bm{A}$ but with a certain column permutation. From Eq. \eqref{eq:correctness}, we have
	$u_n( b_{n} - \overline{\bm{a}}\,\overline{\bm{q}}^ {\T}) =  1-\overline{\bm{u}}(\overline{\bm{b}}^ {\T} - \overline{\bm{A}}\overline{\bm{q}}^ {\T})$,
	combining Eq. \eqref{eq:assumption_on_u}, which gives
	\begin{align*}
		\Abs{b_{n} - \overline{\bm{a}}\,\overline{\bm{q}}^ {\T}}=\frac{1}{\Abs{u_n}}\cdot\Abs{1 -\overline{\bm{u}}(\overline{\bm{b}}^ {\T} - \overline{\bm{A}}\overline{\bm{q}}^ {\T})}\le 1 + \frac{\norm{\overline{\bm{u}}}_\infty}{\abs{u_n}} \cdot \norm{\overline{\bm{b}}^ {\T} - \overline{\bm{A}}\overline{\bm{q}}^ {\T}}_1 \le 1+ \frac{(n-1)^2}{2}\norm{\overline{\bm{A}}},
	\end{align*}
	where $\norm{\cdot}_1$ is the $\ell_1$-norm of a vector,  {the first inequality follows from that $\Abs{\bm{x}\bm{y}^\T}\le\Norm{\bm{x}}_\infty\cdot\Norm{\bm{y}}_1$ holds for any two vectors $\bm{x}$ and $\bm{y}$ of the same dimension, and the second inequality follows from Eq. \eqref{eq:bnd_infty}, \eqref{eq:assumption_on_u}, and $\norm{\bm{x}}_1\le n\norm{\bm{x}}_\infty$ for all $n$-dimensional vector $\bm{x}$}. Therefore, the resulting matrix $\bm{B}$ satisfies $\norm{\bm{B}}  \le n^2\norm{\bm{A}}$.
	
	The cost of Algorithm \ref{algo:detRed} consists in  nonsingular rational linear system solving (Step \ref{algostep:compute_u} and \ref{algostep:compute_q}) that can be finished by a Las Vegas algorithm  in an expected number of $O(n^{\omega+\varepsilon}\log^{1+\varepsilon}\norm{\bm{A}})$ bit operations \cite{MuldersStorjohann2004} and an extended gcd computation (Step \ref{algostep:gcd}) that can be accomplished within $O(n^{2+\varepsilon}\log^{1+\varepsilon}\norm{\bm{A}})$ bit operations \cite[Section 13.2]{Storjohann2005}. Totally, Algorithm \ref{algo:detRed} costs at most $O(n^{\omega+\varepsilon}\log^{1+\varepsilon}\norm{\bm{A}})$ bit operations. 
\end{proof}

\begin{cor}\label{cor:n-1}
	Given an $(n-1)\times n$ primitive matrix $\bm{A}$, there exists a  Las Vegas algorithm which  completes $\bm{A}$ to an $ n\times n $ unimodular matrix in an expected number of  $O(n^{\omega+\varepsilon}\log^{1+\varepsilon}\norm{\bm{A}})$ bit operations. 
\end{cor}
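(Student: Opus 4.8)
The plan is to run essentially Steps \ref{algostep:compute_u}--\ref{algostep:gcd} of Algorithm \ref{algo:detRed} with $\bm{A}^\T$ playing the role of $\bm{C}_{n-1}$, and then stack the resulting vector under $\bm{A}$. Since $\bm{A}\in\integer^{(n-1)\times n}$ is primitive it has full row rank $n-1$ over $\rational$, so the right kernel $\{\bm{x}^\T\in\rational^n : \bm{A}\bm{x}^\T=\bm{0}\}$ is one-dimensional and is spanned by a primitive integer vector $\bm{u}\in\integer^n$, unique up to sign. First I would compute $\bm{u}$: select $n-1$ columns of $\bm{A}$ that form a nonsingular $(n-1)\times(n-1)$ submatrix $\bm{M}$ (such a set exists by full rank and can be located within the target cost by a rank/echelon computation, or simply guessed and verified, since the overall procedure is Las Vegas anyway), transfer the remaining column $\bm{c}$ to the right-hand side, solve the nonsingular rational system $\bm{M}\bm{y}^\T=-\bm{c}$ with the Las Vegas solver of \cite{MuldersStorjohann2004}, and finally clear denominators and divide out the content to get the primitive $\bm{u}$. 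Then I would invoke an extended gcd routine to obtain $\bm{v}\in\integer^n$ with $\bm{v}\bm{u}^\T=1$, and output $\bm{U}=\binom{\bm{A}}{\bm{v}}$.

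Correctness has two ingredients. First, $\bm{u}$ is genuinely primitive, which is what makes the extended gcd step feasible: by Lemma \ref{lem:primitive_hnf}, $\HNF(\bm{A}^\T)=\binom{\bm{I}_{n-1}}{\bm{0}}$, so writing $\bm{W}\bm{A}^\T=\binom{\bm{I}_{n-1}}{\bm{0}}$ with $\bm{W}$ unimodular, the last row of $\bm{W}$ is a primitive vector in the kernel of $\bm{A}$ and hence equals $\pm\bm{u}$. Second, $\bm{U}$ is unimodular: expanding $\det\bm{U}$ along its last row gives $\det\bm{U}=\sum_{j=1}^{n}(-1)^{n+j}v_j M_j$, where $M_j$ is the minor of $\bm{A}$ obtained by deleting column $j$; by Cramer's rule the vector with $j$-th coordinate $(-1)^{n+j}M_j$ lies in $\ker\bm{A}$, and the $\gcd$ of its coordinates is the $(n-1)$-th determinantal divisor of $\bm{A}$, which is $1$ by primitivity, so that vector equals $\pm\bm{u}$ and $\det\bm{U}=\pm\bm{v}\bm{u}^\T=\pm1$.

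For the running time, the kernel step is one nonsingular rational linear system solve (preceded by a rank computation to locate $\bm{M}$), costing an expected $O(n^{\omega+\varepsilon}\log^{1+\varepsilon}\norm{\bm{A}})$ bit operations by \cite{MuldersStorjohann2004}, exactly as in the analysis of Step \ref{algostep:compute_u} in the proof of Proposition \ref{prop:det_red}. By the Hadamard bound the entries of $\bm{u}$ are at most $(\sqrt{n-1}\,\norm{\bm{A}})^{n-1}$ in absolute value, so each has bit-length $O(n\log(n\norm{\bm{A}}))$ and the extended gcd of these $n$ integers costs $O(n^{2+\varepsilon}\log^{1+\varepsilon}\norm{\bm{A}})$ bit operations \cite[Section 13.2]{Storjohann2005}; forming $\bm{U}$ is free. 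Altogether the algorithm uses an expected $O(n^{\omega+\varepsilon}\log^{1+\varepsilon}\norm{\bm{A}})$ bit operations, and it is Las Vegas because the only randomized (and only possibly failing) subroutine is the solver of \cite{MuldersStorjohann2004}. I expect the main obstacle to be purely the bookkeeping in this complexity estimate --- checking that finding the independent column set and controlling the bit-lengths of $\bm{u}$ all fit inside the stated budget --- while the algebraic part (primitivity of $\bm{u}$ and the cofactor/Cramer identity) is routine.
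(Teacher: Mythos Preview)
Your proposal is correct and follows essentially the same approach as the paper: the paper feeds $\bm{B}=\binom{\bm{A}}{\bm{a}_n}^\T$ (with $\bm{a}_n$ any row making $\bm{B}$ nonsingular) into Algorithm~\ref{algo:detRed} as a black box and argues via $\HNF(\bm{A}^\T)=\binom{\bm{I}_{n-1}}{\bm{0}}$ that the output is unimodular, whereas you extract only Steps~\ref{algostep:compute_u}--\ref{algostep:gcd} of that algorithm and verify unimodularity directly via the cofactor identity. The only substantive difference is that the paper's call to the full Algorithm~\ref{algo:detRed} also performs the size-reduction Steps~\ref{algostep:sr_ini}--\ref{algostep:sr_end}, so its completing row satisfies $\norm{\bm{v}}_\infty\le n^2\norm{\bm{A}}$, while your $\bm{v}$ may have entries as large as the Hadamard bound; since the corollary does not assert any size control, this does not affect correctness.
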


\begin{proof}
	{Since $\bm{A}$ is primitive, Lemma \ref{lem:primitive_hnf} implies that $\HNF(\bm{A}^\T)$ has the form $\binom{\bm{I}_{n-1}}{\bm{0}}$}. Now, one can use the matrix $\bm{B} = \binom{\bm{A}}{\bm{a}_n}^\T$ as the input for Algorithm \ref{algo:detRed}, where $\bm{a}_n$ can be an arbitrary integer vector such that the resulting square matrix is nonsingular.  {Then, $\HNF(\bm{B})$ has the same form as the identity matrix of order $n$ except for the last column.} Thus, the transpose of the output matrix of Algorithm \ref{algo:detRed} will be a unimodular completion of $\bm{A}$.
\end{proof}

\begin{rem}\label{rem:umc}
	To complete an $(n-1)\times n$ primtive matrix to an $n\times n$ unimodular matrix, a standard method is the following: firstly compute $n$ determinants of all  $(n-1)\times (n-1)$ submatrix of the input matrix and then invoking an extended euclidean algorithm will give the information of the last row. However, this standard method can be finished in an expected number of $O(n^{\omega+1+\varepsilon}\log^{1+\varepsilon} \norm{\bm{B}} )$ bit operations, even using the fast Las Vegas  algorithm in \cite{Storjohann2005} for computing the determinant  of an integer matrix in an expected number of $O(n^{\omega+\varepsilon}\log^{1+\varepsilon} \norm{\bm{B}} )$ bit operations.
\end{rem}

\subsection[Proof of Theorem 2]{Proof of Theorem \ref{thm:umc}}
We prove Theorem \ref{thm:umc} by Algorithm \ref{algo:umc}, which uses the Iterated Determinant Reduction presented in \cite[Section 13.2]{Storjohann2005}. Given a nonsingular matrix $\bm{A}\in\integer^{n\times n}$, the Iterated Determinant Reduction technique computes a matrix $\bm{B}\in\integer^{n\times n}$, obtained from $\bm{A}$ by replacing the last $d$ column, such that the last $d$ diagonal entries in the Hermite normal form of $\bm{B}$ is one. The algorithm consists of $d$ times calling of Algorithm \ref{algo:detRed}, each followed with multiplying by a permutation matrix $\bm{P} = \begin{footnotesize}
	\begin{pmatrix}
		\bm{0} &\bm{I}_{n-1}\\
		1 &0
	\end{pmatrix}
\end{footnotesize}$ from right. Now we  are ready to present our algorithm for unimodular matrix completion (Algorithm \ref{algo:umc}).

\begin{algorithm}
	\caption{(Unimodular matrix completion)}
	\label{algo:umc}
	\begin{algorithmic}[1]
		\REQUIRE A primitive matrix $\bm{A}\in\integer^{k\times n}$  {with $k<n$ and $n\ge 5$}. 
		\ENSURE A unimodular matrix $\bm{U}\in \integer^{n\times n}$, with $\bm{U}$ equal to $\bm{A}$ except for the last $n-k$ rows and  $\norm{\bm{U}}\le n^8\norm{\bm{A}}$.
		\STATE\label{algostep:loop} Repeat 
		\STATE\label{algostep:ini} \qquad  {Set $\lambda := \max\{\norm{\bm{A}}, \lceil 3(n-3)^{2/5}\rceil\}$ and $\bm{B}:=\bm{A}^{\T}$.}
		\STATE\label{algostep:completion} \qquad Complete $\bm{B}$ as an $n\times n$ matrix with entries chosen  randomly and    independently from \\
		\qquad  the uniform distribution over $\{0,1\mc\ldots\mc\lambda-1\}$. 
		\STATE\label{algostep:idr} \qquad Repeat the following four times:
		\STATE\label{algostep:dr} \qquad\qquad Set $\bm{B}$ as the output of Algorithm \ref{algo:detRed} with input $\bm{B}$ and then set $\bm{B}:=\bm{BP}$.
		\STATE\label{algostep:until} Until $\det(\bm{B})=\pm 1$
		\RETURN $\bm{U} := (\bm{B}(\bm{P}^{-1})^{4})^{\T}$.
	\end{algorithmic}
\end{algorithm}

Denote $\bm{B}_i$ by the submatrix of $\bm{B}$ consisting of the first $i$ columns. The goal of the loop in Step \ref{algostep:idr} is to make the last four diagonal entries of the Hermite normal form of   $\bm{B}$ be one.  {First, we assume that $\bm{B}_{n-4}^{\T}$ is primitive at the end of Step \ref{algostep:completion}. Then Lemma \ref{lem:primitive_hnf} implies that $$\HNF(\bm{B}) = \begin{pmatrix}
		\bm{I_{n-4}} & \ast & \ast &  \ast&  \ast\\
		& \ast & \ast & \ast & \ast \\
		&  & \ast & \ast & \ast \\
		&  &   & \ast & \ast \\
		&  &  &   & \ast\\
	\end{pmatrix}.$$ 
	After running Algorithm \ref{algo:detRed} once, the HNF of the resulting $\bm{B}$ has the following form
	$$\HNF(\bm{B}) = \begin{pmatrix}
		\bm{I_{n-4}} & \ast & \ast &  \ast&  \bm{0}\\
		& \ast & \ast & \ast & 0 \\
		&  & \ast & \ast & 0 \\
		&  &   & \ast & 0 \\
		&  &  &   & 1\\
	\end{pmatrix}$$ and
	$$\HNF(\bm{BP}) = \begin{pmatrix}
		\bm{I_{n-3}} & \ast & \ast &  \ast\\
		& \ast & \ast & \ast  \\
		&  & \ast & \ast  \\
		&  &   & \ast  \\
	\end{pmatrix}.$$ 
	Therefore, at the end of execution of Step \ref{algostep:idr}, we have $\HNF(\bm{B}) = \bm{I_n}$ for the resluting $\bm{B}$, which implies that $\bm{B}$ is unimodular. 
}

Thanks to Eq. \eqref{eq:simpler_bnd}  {with $s=3$ and $\lambda \ge 3(n-3)^{2/5}$}, after each execution of Step \ref{algostep:completion}, $\bm{B}_{n-4}^{\T}$ is primitive with probability at least $0.2$. Therefore, after  $c$ (a constant) executions of Step \ref{algostep:completion}, the probability that there is at least one $\bm{B}_{n-4}^{\T}$ is primitive is at least $1 - 0.8^c$. So after a constant number of executions of the loop Step \ref{algostep:loop}--\ref{algostep:until}, it is expected that the output of Algorithm \ref{algo:umc} is a unimodular completion of $\bm{A}$. After each execution of Step \ref{algostep:dr},  it follows  from Proposition \ref{prop:det_red} that the magnitude of the new column is bounded by $O(n^2\norm{\bm{B}})$. Therefore, after four executions of Step \ref{algostep:dr}, the magnitude of the new column is bounded by $O(n^8\norm{\bm{A}})$. So the correctness of Algorithm \ref{algo:umc} is proved.

The main cost part of Algorithm \ref{algo:umc} is a constant number of executions of the loop Step  \ref{algostep:loop}--\ref{algostep:until}. According to Proposition \ref{prop:det_red} and  the algorithm for computing determinant in \cite{Storjohann2005}, each execution of the loop can be finished in an expected number of $O(n^{\omega+\varepsilon}\log^{1+\varepsilon}\norm{\bm{A}})$ bit operations. As a consequence, Theorem \ref{thm:umc} is proved.  \qed

\begin{rem}\label{rem:final}
	Without the help of Theorem \ref{thm:main}, one may use the Iterated Determinant Reduction algorithm $n-k$ times for unimodular completion. This results in an algorithm that completes $\bm{A}$ to an $n\times n$ unimodular matrix $\bm{U}$ with $\norm{\bm{U}}\le n^{2(n-k)}\norm{\bm{A}}$ in an expected number of $O((n-k)n^{\omega+\varepsilon}\log^{1+\varepsilon}\norm{\bm{A}})$ bit operations.
\end{rem}

%%%%%%%%%%%%%%%%%%%%%%%%%%%%%%%%%%%%%%%%%%%%%%%%%%%%%%%%%%%%%%%%%%%%%%%%
\section{Experiments and discussions}\label{sec:exp}

In this section, we present an  experimental study  on the empirical probability for the case that is included in Theorem \ref{thm:main} ($s\ge 3$), and also for the case that is not included in Theorem \ref{thm:main}, e.g., the case of $s<3$. In all the following experimental data, each empirical probability (labeled Exp.) are obtained by running $10,000$ random tests on  computer algebra system Maple and counting the success rate, i.e., the proportion of primitive matrices among all resulting matrices. 

\subsection[s larger than 3]{The case of $s\ge 3$}

In Table \ref{tab:k_0_s_3} and \ref{tab:k_0_s_4} we study the for the case of $k=0$, so that we can compare the  empirical probability (column labeled Exp.), the lower bound given in Theorem \ref{thm:main} (column labeled Th. 1), and the limit probability given in Eq. \eqref{eq:natural_density} (column labeled Limit probability). We use three different bounds for $\lambda$, namely, $\lambda = 10^5$, $\lambda = 10^{10}$ and $\lambda = 10^{20}$.

\begin{table}[!htbp]
	{\small
		\begin{center}
			\caption{Average empirical probability vs the probability in Theorem \ref{thm:main} ( $k=0$ and $s=3$)}\vskip 1mm
			\label{tab:k_0_s_3}
			
			{\small
				\begin{tabular*}{11.5cm}
					{cccccccc}
					\toprule
					&\multicolumn{2}{c}{$\lambda=10^5$} &\multicolumn{2}{c}{$\lambda=10^{10}$}&\multicolumn{2}{c}{$\lambda=10^{20}$}& \\
					$n$ 	&Exp. &Th. \ref{thm:main}&Exp. &Th. \ref{thm:main}&Exp. &Th. \ref{thm:main}& Limit probability\\[0.8ex]
					\midrule
					$5$		&$0.9652$&$0.7366$&$0.9662$&$0.7366$&$0.9639$&$0.7366$&$0.9643$\\[0.8ex]
					$10$		&$0.9335$&$0.2792$&$0.9291$&$0.2792$&$0.9306$&$0.2792$&$0.9334$\\ [0.8ex]
					$15	$	&$0.9292$&$0.2190$&$0.9278$&$0.2190$&$0.9312$&$0.2190$&$0.9325$\\ [0.8ex]
					$	20$		&$0.9338$&$0.2110$&$0.9349$&$0.2110$&$0.9345$&$0.2110$&$0.9325$\\[0.8ex]
					\bottomrule
			\end{tabular*}}
	\end{center}}
\end{table}

Both Table \ref{tab:k_0_s_3} and \ref{tab:k_0_s_4} shows that the empirical probability is relatively near to the limit probability, both of which are much better than our theoretical lower bound given in Theorem \ref{thm:main}.  As indicated previously, $s=3$ is the smallest $s$ such that the lower bound given in Theorem \ref{thm:main} is between $0$ and $1$ for these experiments, while $s=n-k-2$ is the largest possible value. 

Comparing  Table \ref{tab:k_0_s_3} with \ref{tab:k_0_s_4} shows that larger $s$ implies that  both higher experimental success rate and better theoretical bound,  however, smaller $s$ implies the resulting matrix is closer to a unimodular matrix. In particular, for the case of $s=n-k-2$, the probability bound given in Theorem \ref{thm:main} matches very well with both empirical probability and the limit probability.

\begin{table}[!htbp]
	\begin{center}
		{\small
			\begin{center}
				\caption{Average empirical probability vs the probability in Theorem \ref{thm:main} ($k=0$ and $s=n-k-2$)}\vskip 1mm
				\label{tab:k_0_s_4}
				
				{\small
					\begin{tabular*}{11.5cm}
						{cccccccc}
						\toprule
						&\multicolumn{2}{c}{$\lambda=10^5$} &\multicolumn{2}{c}{$\lambda=10^{10}$}&\multicolumn{2}{c}{$\lambda=10^{20}$}& \\
						$n$ 	&Exp. &Th. \ref{thm:main}&Exp. &Th. \ref{thm:main}&Exp. &Th. \ref{thm:main}& Limit probability\\[0.8ex]\midrule
						$5$		&$0.9652$&$0.7366$&$0.9662$&$0.7366$&$0.9639$&$0.7366$&$0.9643$\\ [0.8ex]
						$10$		&$0.9995$&$0.9653$&$0.9990$&$0.9653$&$0.9990$&$0.9653$&$0.9990$\\ [0.8ex]
						$15$		&$0.9999$&$0.9954$&$1.0000$&$0.9954$&$1.0000$&$0.9954$&$0.9999$\\ [0.8ex]
						$20$		&$1.0000$&$0.9993$&$1.0000$&$0.9993$&$1.0000$&$0.9993$&$0.9999$\\[0.8ex]
						\bottomrule
				\end{tabular*}}
		\end{center}}
	\end{center}
\end{table}

In addition, Table \ref{tab:k_0_s_3} and \ref{tab:k_0_s_4} also show that for different $\lambda$ with same $n$,    the theoretical bounds in column Th. 1 are almost the same and the data of empirical probability are almost the same as well. This is because that for a large enough $\lambda$, the bound given in  Theorem \ref{thm:main} is almost independent of $\lambda$, as indicated by  the oversimplified bound given in \eqref{eq:oversimplified}. For this reason, we fix $\lambda=10^5$ for all other experiments (Table \ref{tab:k_1_s_3}--\ref{tab:k1}).

Tables \ref{tab:k_1_s_3}--\ref{tab:k_n_s_n} are for the case of $k>0$, for which there does not exist a known limit probability. We generate the initial primitive matrix as follows: We first generate a $k\times n$ matrix, whose entries are chosen randomly and independently from the uniform distribution over $[-\lambda\mc\lambda]\cap\integer$. If the matrix is not primitive, we regenerate a new matrix until it is eventually primitive. For each initial matrix, we complete it with uniformly random entries from $\Lambda$ to an $(n-s-1)\times n$ matrix $10,000$ times and count the success rate. From these experiments, we can observe a similar phenomenon as the case of $k=0$. 

%%%%%%%%%%%%%%%%%%%%%%%%%%%%%%%%%%%%%%%%%%%%%%%%%%%%%%%%%%%%%%%%%%%%%%%%
\begin{table}[!htbp]
	\caption{Average empirical probability vs the probability in Theorem \ref{thm:main}  ($k=1$, $s=3$)}
	\label{tab:k_1_s_3}
	\centering
	\begin{tabular}{ccccccc}
		\toprule
		$n$ & $10$&$15$&$20$&$25$&$30$\\[0.8ex]\midrule
		Exp. 	&$0.9321$	&$0.9283$	&$0.9291$	&$0.9324$	&$0.9310$	\\[0.8ex]
		Th. \ref{thm:main}	&$0.3139$	&$0.2235$	&$0.2116$	&$0.2101$	&$0.2099$	
		\\[0.8ex]\bottomrule
	\end{tabular}
\end{table}

%%%%%%%%%%%%%%%%%%%%%%%%%%%%%%%%%%%%%%%%%%%%%%%%%%%%%%%%%%%%%%%%%%%%%%%%
\begin{table}[!htbp]
	\caption{Average empirical probability vs the probability in Theorem \ref{thm:main}  ($k=1$, $s=n-k-2$)}
	\label{tab:k_1_s_n}
	\centering
	\begin{tabular}{ccccccc}
		\toprule
		$n$ & $5$&$10$&$15$&$20$&$25$&$30$\\[0.8ex]\midrule
		Exp. 				&$0.8919$	&$0.9969$	&$0.9999$	&$1.0000$	&$1.0000$	&$1.0000$	\\[0.8ex]
		Th. \ref{thm:main}	&$0.6049$	&$0.9479$	&$0.9931$	&$0.9990$	&$0.9998$	&$0.9999$	
		\\[0.8ex]\bottomrule
	\end{tabular}
\end{table}

%%%%%%%%%%%%%%%%%%%%%%%%%%%%%%%%%%%%%%%%%%%%%%%%%%%%%%%%%%%%%%%%%%%%%%%%
\begin{table}[!htbp]
	\caption{Average empirical probability vs the probability in Theorem \ref{thm:main} ($k=n/2$, $s=3$)}\label{tab:k_n_s_3}
	\centering
	\begin{tabular}{ccccccc}
		\toprule
		$n$ & $16$&$20$&$24$&$28$&$32$&$36$\\[0.8ex]\midrule
		Exp. 				&$0.9349$	&$0.9338$&$0.9340$		&$0.9312$	&$0.9352$	&$0.9333$	\\[0.8ex]
		Th. \ref{thm:main}	&$0.3659$	&$0.2792$	&$0.2407$	&$0.2235$	&$0.2159$	&$0.2125$	
		\\[0.8ex]\bottomrule
	\end{tabular}
\end{table}

%%%%%%%%%%%%%%%%%%%%%%%%%%%%%%%%%%%%%%%%%%%%%%%%%%%%%%%%%%%%%%%%%%%%%%%%
\begin{table}[!htbp]
	\caption{Average empirical probability vs the probability in Theorem \ref{thm:main} ($k=n/2$, $s=n-k-2$)}\label{tab:k_n_s_n}
	\centering
	\begin{tabular}{ccccccc}
		\toprule
		$n$ & $16$&$20$&$24$&$28$&$32$&$36$\\[0.8ex]\midrule
		Exp. 				&$0.9919$	&$0.9980$&$0.9996$		&$0.9999$	&$1.0000$	&$1.0000$	\\[0.8ex]
		Th. \ref{thm:main}	&$0.9219$	&$0.9653$	&$0.9845$	&$0.9931$	&$0.9969$	&$0.9986$	
		\\[0.8ex]\bottomrule
	\end{tabular}
\end{table}

Totally, the probability bound presented in Theorem \ref{thm:main} is tight, especially for the case of large $s$. However, how to improve the theoretical bound for small $s$ is an intriguing problem.

%%%%%%%%%%%%%%%%%%%%%%%%%%%%%%%%%%%%%%%%%%%%%%%%%%%%%%%%%%%%%%%%%%%%%%%%

\subsection[s smaller than 3]{The case of $s<3$}

For $s<3$, the lower bound on that the resulting $(n-s-1)\times n$ matrix is primitive given in Theorem \ref{thm:main} will be negative and hence useless. Therefore, it would be very interesting and useful to obtain a lower bound for the case of $s<3$. Here we present some experimental results.

\begin{table}[!htbp] 
	\caption{Average empirical probability for the case of $s=2$ and $\lambda=10^5$}\label{tab:s2}
	\centering
	\begin{tabular}{ccccccc}
		\toprule
		$n$ & $16$&$20$&$24$&$28$&$32$&$36$\\[0.8ex]\midrule
		$k=0$ 			&$0.8599$			&$0.8543$	&$0.8575$	&$0.8604$	&$0.8628$	&$0.8643$	\\[0.8ex]
		$k=1$			&$0.8654$			&$0.8618$	&$0.8580$	&$0.8671$	&$0.8646$	&$0.8620$\\[0.8ex]
		$k=\frac{n}{2}$	&$0.8609$			&$0.8611$	&$0.8651$	&$0.8621$	&$0.8682$	&$0.8662$\\[0.8ex]\bottomrule
	\end{tabular}
\end{table}

\begin{table}[!htbp]
	\caption{Average empirical probability for the case of $s=1$ and $\lambda=10^5$}\label{tab:s1}
	\centering
	\begin{tabular}{ccccccc}
		\toprule
		$n$ & $16$&$20$&$24$&$28$&$32$&$36$\\[0.8ex]\midrule
		$k=0$ 				&$0.7201$			&$0.7177$	&$0.7124$	&$0.7227$	&$0.7125$	&$0.7110$	\\[0.8ex]
		$k=1$				&$0.7103$			&$0.7141$	&$0.7154$	&$0.7129$	&$0.7118$	&$0.7192$\\[0.8ex]
		$k=\frac{n}{2}$		&$0.7168$			&$0.7212$	&$0.7210$	&$0.7226$	&$0.7106$	&$0.7154$\\[0.8ex]\bottomrule
	\end{tabular}
\end{table}

\begin{table}[!htbp]
	\caption{Average empirical probability for the case of $s=0$ and $\lambda=10^5$}\label{tab:s0}
	\centering
	\begin{tabular}{ccccccc}
		\toprule
		$n$ & $16$&$20$&$24$&$28$&$32$&$36$\\[0.8ex]\midrule
		$k=0$ 			&$0.4365$	&$0.4363$	&$0.4353$	&$0.4435$	&$0.4434$	&$0.4377$	\\[0.8ex]
		$k=1$			&$0.4323$	&$0.4337$	&$0.4345$	&$0.4451$	&$0.4336$	&$0.4440$\\[0.8ex]
		$k=\frac{n}{2}$	&$0.4385$	&$0.4371$	&$0.4290$	&$0.4330$	&$0.4427$	&$0.4330$\\[0.8ex]\bottomrule
	\end{tabular}
\end{table}

\begin{table}[!htbp]
	\caption{Average empirical probability for the case of $k=1$ and $\lambda=10^5$ with fixed  $(1\mc 1\mc\ldots\mc 1)\in\integer^n$ as the $k\times n$ matrix to be completed.}\label{tab:k1}
	\centering
	\begin{tabular}{ccccccc}
		\toprule
		$n$ 			& $16$		&$20$		&$24$		&$28$		&$32$		&$36$\\[0.8ex]\midrule
		$s=0$ 			&$0.4330$	&$0.4354$	&$0.4342$	&$0.4381$	&$0.4284$	&$0.4407$\\[0.8ex]
		$s=1$			&$0.7163$	&$0.7120$	&$0.7209$	&$0.7198$	&$0.7218$	&$0.7159$\\[0.8ex]
		$s=2$			&$0.8629$	&$0.8641$	&$0.8649$	&$0.8673$	&$0.8616$	&$0.8568$\\[0.8ex]\bottomrule
	\end{tabular}
\end{table}

All test examples in Table \ref{tab:s2}--\ref{tab:s0} are generated with the same method described in the last subsection, i.e., the input matrices are randomly chosen. In Table \ref{tab:k1}, all test examples are fixed to a $1\times n$ primitive row $(1\mc 1\mc\ldots\mc 1)$, but still with $\lambda=10^5$. All of these tables show that for $0\le s\le 2$, the empirical probability of that the resulting matrix is primitive is relatively high, similar with that of the case $s\ge 3$ shown in the last section. However, an effective theoretical lower bound on the probability for this case is left open.

\paragraph{Acknowledgments.} The authors would like to thank  {two anonymous referees} for helpful comments and suggestions that  makes the presentation of this paper clearer.

\let\doi\undefined\newcommand{\doi}[1]{\url{https://doi.org/#1}}\newcommand{\noopsort}[1]{}


\begin{thebibliography}{KNSW20}
	
	\bibitem[AR16]{AggarwalRegev2016}
	Divesh Aggarwal and Oded Regev.
	\newblock A note on discrete gaussian combinations of lattice vectors.
	\newblock {\em Chicago Journal of Theoretical Computer Science}, 2016(Article
	07):1--11, 2016.
	\newblock \doi{10.4086/cjtcs.2016.007}.
	
	\bibitem[Bab85]{Babai1985}
	L\'aszl\'o Babai.
	\newblock On {L}ov\'asz' lattice reduction and the nearest lattice point
	problem.
	\newblock In K.~Mehlhorn, editor, {\em Proc STACS '85~(January 3--5, 1985,
		Saarbr\"ucken, Germany)}, volume 182 of {\em {L}ecture {N}otes in {C}omputer
		{S}cience}, pages 13--20. Springer, Heidelberg, 1985.
	\newblock \doi{10.1007/BFb0023990}.
	
	\bibitem[CS05]{ChenStorjohann2005:latcompr}
	Zhuliang Chen and Arne Storjohann.
	\newblock Lattice compression of integer matrices.
	\newblock \url{https://cs.uwaterloo.ca/~astorjoh/latcomp.ps}, 2005.
	
	\bibitem[DS17]{DuffnerSilva2017}
	M.~Gra\c{c}a Duffner and Fernando~C. Silva.
	\newblock On the existence of unimodular matrices with a prescribed submatrix.
	\newblock {\em Linear Algebra and its Applications}, 515:321--330, 2017.
	\newblock \doi{10.1016/j.laa.2016.11.015}.
	
	\bibitem[EGV00]{EberlyGiesbrechtVillard2000}
	Wayne Eberly, Mark Giesbrecht, and Gilles Villard.
	\newblock Computing the determinant and {S}mith form of an integer matrix.
	\newblock In {\em Proceedings of the 41st Annual Symposium on Foundations of
		Computer Science (12--14 November 2000, Redondo Beach, USA)}, pages 675--685.
	IEEE Computer Society, Los Alamitos, 2000.
	\newblock \doi{10.1109/SFCS.2000.892335}.
	
	\bibitem[Fan07]{Fang2007}
	Maozhong Fang.
	\newblock On the completion of a partial integral matrix to a unimodular
	matrix.
	\newblock {\em Linear Algebra and its Applications}, 422(1):291--294, 2007.
	\newblock \doi{10.1016/j.laa.2006.10.012}.
	
	\bibitem[FW14]{FonteinWocjan2014}
	Felix Fontein and Pawel Wocjan.
	\newblock On the probability of generating a lattice.
	\newblock {\em Journal of Symbolic Computation}, 64:3--15, 2014.
	\newblock \doi{10.1016/j.jsc.2013.12.002}.
	
	\bibitem[GHL16]{GuoHouLiu2016}
	Xiangqian Guo, Fengdan Hou, and Xuewen Liu.
	\newblock Natural density of integral matrices that can be extended to
	invertible integral matrices.
	\newblock {\em Linear and Multilinear Algebra}, 64(9):1878--1886, 2016.
	\newblock \doi{10.1080/03081087.2015.1127316}.
	
	\bibitem[KBS13]{KalaimaniBelurSivasubramanian2013}
	Rachel~K. Kalaimani, Madhu~N. Belur, and Sivaramakrishnan Sivasubramanian.
	\newblock Generic pole assignability, structurally constrained controllers and
	unimodular completion.
	\newblock {\em Linear Algebra and its Applications}, 439(12):4003--4022, 2013.
	\newblock \doi{10.1016/j.laa.2013.10.004}.
	
	\bibitem[KNSW20]{KirshanovaNguyenStehleWallet2020}
	Elena Kirshanova, Huyen Nguyen, Damien Stehl{\'e}, and Alexandre Wallet.
	\newblock On the smoothing parameter and last minimum of random orthogonal
	lattices.
	\newblock {\em Designs, Codes and Cryptography}, 88(5):931--950, 2020.
	\newblock \doi{10.1007/s10623-020-00719-w}.
	
	\bibitem[LLL82]{LenstraLenstraLovasz1982}
	Arjen~K. Lenstra, Hendrik~W. Lenstra, and L\'aszl\'o Lov{\'a}sz.
	\newblock Factoring polynomials with rational coefficients.
	\newblock {\em {M}athematische {A}nnalen}, 261(4):515--534, 1982.
	\newblock \doi{10.1007/BF01457454}.
	
	\bibitem[MRW11]{MazeRosenthalWagner2011}
	G{\'e}rard Maze, Joachim Rosenthal, and Urs Wagner.
	\newblock Natural density of rectangular unimodular integer matrices.
	\newblock {\em Linear Algebra and its Applications}, 434(5):1319--1324, 2011.
	\newblock \doi{10.1016/j.laa.2010.11.015}.
	
	\bibitem[MS04]{MuldersStorjohann2004}
	Thom Mulders and Arne Storjohann.
	\newblock Certified dense linear system solving.
	\newblock {\em Journal of Symbolic Computation}, 37(4):485--510, 2004.
	\newblock \doi{10.1016/j.jsc.2003.07.004}.
	
	\bibitem[PL02]{PhoongLin2002}
	See-May Phoong and Yuan-Pei Lin.
	\newblock Application of unimodular matrices to signal compression.
	\newblock In {\em Proceedings of 2002 IEEE International Symposium on Circuits
		and Systems (May 26--29, 2002, Phoenix-Scottsdale, USA)}, pages 837--840.
	IEEE, Piscataway, 2002.
	\newblock \doi{10.1109/ISCAS.2002.1009971}.
	
	\bibitem[Ran91]{Randall1991}
	Dana Randall.
	\newblock Efficient generation of random nonsingular matrices.
	\newblock Technical Report UCB/CSD-91-658, EECS Department, University of
	California, Berkeley, Nov 1991.
	\newblock \url{http://www2.eecs.berkeley.edu/Pubs/TechRpts/1991/6147.html}.
	
	\bibitem[Rei56]{Reiner1956}
	Irving Reiner.
	\newblock Unimodular complements.
	\newblock {\em The American Mathematical Monthly}, 63(4):246--247, 1956.
	\newblock \doi{10.2307/2310351}.
	
	\bibitem[Sch87]{Schnorr1987}
	Claus-Peter Schnorr.
	\newblock A hierarchy of polynomial time lattice basis reduction algorithms.
	\newblock {\em Theoretical Computer Science}, 53(2--3):201--224, 1987.
	\newblock \doi{10.1016/0304-3975(87)90064-8}.
	
	\bibitem[Sto03]{Storjohann2003}
	Arne Storjohann.
	\newblock High-order lifting and integrality certification.
	\newblock {\em Journal of Symbolic Computation}, 36(3):613--648, 2003.
	\newblock \doi{10.1016/S0747-7171(03)00097-X}.
	
	\bibitem[Sto05]{Storjohann2005}
	Arne Storjohann.
	\newblock The shifted number system for fast linear algebra on integer
	matrices.
	\newblock {\em Journal of Complexity}, 21(4):609--650, 2005.
	\newblock \doi{10.1016/j.jco.2005.04.002}.
	
	\bibitem[Zha06]{Zhan2006}
	Xingzhi Zhan.
	\newblock Completion of a partial integral matrix to a unimodular matrix.
	\newblock {\em Linear Algebra and its Applications}, 414(1):373--377, 2006.
	\newblock \doi{10.1016/j.laa.2005.10.013}.
	
	\bibitem[ZL14]{ZhouLabahn2014}
	Wei Zhou and George Labahn.
	\newblock Unimodular completion of polynomial matrices.
	\newblock In Katsusuke Nabeshima, editor, {\em Proceedings of the 39th
		International Symposium on Symbolic and Algebraic Computation (July 23–25,
		2014, Kobe, Japan)}, pages 413--420. ACM, New York, 2014.
	\newblock \doi{10.1145/2608628.2608640}.
	
\end{thebibliography}
\end{document}